\documentclass[11pt]{article}
\usepackage[margin=1in]{geometry}
\usepackage{microtype}
\usepackage{subfigure}
\usepackage{booktabs} 

\usepackage{amsmath}
\usepackage{amssymb}
\usepackage{mathtools}
\usepackage{amsthm}

\usepackage{url}

\usepackage{amsfonts}       
\usepackage{nicefrac}       
\usepackage{microtype}      
\usepackage{xcolor}         
\usepackage{enumitem}
\usepackage{xspace}
\usepackage{thm-restate}

\theoremstyle{plain}
\newtheorem{theorem}{Theorem}[section]
\newtheorem{proposition}[theorem]{Proposition}
\newtheorem{lemma}[theorem]{Lemma}

\newtheorem{corollary}[theorem]{Corollary}
\theoremstyle{definition}
\newtheorem{definition}[theorem]{Definition}

\theoremstyle{remark}

\usepackage{verbatim}
\usepackage{graphicx}
\usepackage{algorithm}
\usepackage{algorithmic}

\newcommand{\xnote}[1]{\textcolor{blue}{[Haike: #1]}}

\DeclareMathOperator*{\argmin}{argmin}

\usepackage{hyperref}
\hypersetup{
    colorlinks = true,
    citecolor = blue,
    linkcolor = black,
    urlcolor = black,
}
\usepackage[capitalize,noabbrev]{cleveref}

\title{Graph-Based Algorithms for Diverse Similarity Search}
\author{ 
Piyush Anand\thanks{The authors are sorted alphabetically.} \\ Microsoft\\ piyush.anand@microsoft.com
\and Piotr Indyk \\ MIT \\ indyk@mit.edu
\and Ravishankar Krishnaswamy \\ Microsoft Research \\ rakri@microsoft.com \vspace{0.3cm}
\and Sepideh Mahabadi \\ Microsorft Research \\ smahabadi@microsoft.com
\and Vikas C. Raykar \\ Microsoft \\ vikasraykar@microsoft.com
\and Kirankumar Shiragur \\ Microsoft Research \\ kshiragur@microsoft.com\vspace{0.3cm}
\and Haike Xu \\ MIT \\ haikexu@mit.edu
}
\date{}

\begin{document}
\maketitle
\begin{abstract}
Nearest neighbor search is a fundamental data structure problem with many applications in machine learning, computer vision, recommendation systems and other fields. Although the main objective of the data structure is to quickly report data points that are closest to a given query, it has long been noted~\cite{carbonell1998use}  that without additional constraints the reported answers can be redundant and/or duplicative. This issue is typically addressed in two stages: in the first stage, the algorithm retrieves a (large) number $r$ of points closest to the query, while in the second stage, the $r$ points are post-processed and a small subset is selected to maximize the desired diversity objective. Although popular, this method suffers from a fundamental efficiency bottleneck, as the set of points retrieved in the first stage often needs to be much larger than the final output.

In this paper we present provably efficient algorithms for approximate nearest neighbor search with diversity constraints that bypass this two stage process. Our algorithms are based on popular graph-based methods, which allows us to ``piggy-back'' on the existing efficient implementations.  These are the first graph-based algorithms for nearest neighbor search with diversity constraints.   For data sets with low intrinsic dimension, our data structures report a diverse set of $k$ points approximately closest to the query, in time that only depends on $k$ and $\log \Delta$, where $\Delta$ is the ratio of the diameter to the closest pair distance in the data set. This bound is qualitatively similar to the best known bounds for standard (non-diverse) graph-based algorithms. Our experiments show that the search time of our algorithms is substantially lower than that using the standard two-stage approach. 
\end{abstract}

\section{Introduction}
Nearest neighbor search is a classic data structure problem with many applications in machine learning, computer vision, recommendation systems and other areas~\cite{shakhnarovich2005nearest}. It is defined as follows: given a set $P$ of $n$ points from some space $X$ equipped with a distance function $D(\cdot, \cdot)$,  build a data structure that, given any query point $q \in X$, returns a point $p \in P$ that minimizes $D(q,p)$. In a more general version of the problem we are given a parameter $k$, and the goal is to report $k$ points in $P$ that are closest to $q$. In a typical scenario, the metric space $(X,D)$ is the $d$-dimensional space, and $D(p,q)$ is the Euclidean distance between points $p$ and $q$.

Since for high-dimensional point sets the known {\em exact} nearest neighbor search data structures are not efficient,   several approximate versions of this problem have been formulated. A popular theoretical formulation relaxes the requirement that the query algorithm must return the exact closest point $p$, and instead allows it to output any point $p' \in P$ that is a $c$-approximate nearest neighbor of $q$ in $P$, i.e., $D(q,p') \le c D(q,p)$.  In empirical studies,  the quality of the set of points reported by an approximate data structure is measured by its recall, i.e., the average fraction of the true $k$ nearest neighbors returned by the data structure.

Although minimizing the distance of the reported points to the query is often the main objective, it has long been noted~\cite{carbonell1998use} that, without additional constraints, the reported answers are often redundant and/or duplicative. This is particularly important in applications such as recommendation systems or information retrieval, where many similar variants of the same product, product seller, or document exist. For example, an update to the search results listing algorithm implemented by Google in 2019 ensures that ``no more than two pages from the same site'' are listed~\cite{google-div}. Such constraints can be formulated by assuming that each point is assigned a ``color'' (e.g., site id or product seller) and requiring the data structure to output a set $S$ of $k$ points containing {\em at most $k'$ points of each color},  whose distances to $q$ are (approximately) optimal. A more general formulation allows an arbitrary {\em diversity metric $\rho$} (typically different from $D$), and requires the data structure to report a set $S$ of $k$ points such that for any distinct $p,p' \in S$, $\rho(p,p') \ge C$, for some required diversity parameter $C>0$.

The aforementioned paper of~\cite{carbonell1998use}  stimulated the development of the rich area of {\em diversity-based reranking}, which became the dominant approach to this problem. The approach proceeds in two stages. In the first stage, the data structure retrieves $r$ points closest to the query, where $r$ can be much larger than the desired output $k$. In the second stage, the $r$ points are post-processed to maximize the diversity objective of the reported $k$ points.

Though popular, the reranking approach to diversifying nearest neighbor search suffers from a fundamental efficiency bottleneck, as the algorithm needs to retrieve a large enough set to ensure that it contains the $k$ diverse points. In many scenarios, the number $r$ of points that need to be retrieved can be much larger than $k$ (see e.g., \Cref{fig:percentage} and the discussion in the experimental section). In the worst case, it might be necessary to set $r=\Omega(n)$ to ensure that the optimal set is found.
This leads to the following algorithmic question:

\begin{quote} {\em 
    Is it possible to bypass the standard reranking pipeline by directly reporting the $k$ diverse points, in time that depends on $k$ and not $r$?}
\end{quote}

In this paper, we aim to solve this problem both in theory and in practice.  Because of these dual goals,  we focus on designing efficient {\em graph-based} algorithms for diverse similarity search.  In graph-based algorithms,   the data structure consists of a graph between the points in $P$, and the query procedure performs greedy search over this graph to find points close to the query.  Graph-based algorithms such as HNSW~\cite{malkov2018efficient}, NGT~\cite{iwasaki2018optimization}, and DiskANN~\cite{jayaram2019diskann} have become popular tools in practice, often topping Approximate Nearest Neighbor benchmarks~\cite{ANN}. In addition, they are highly versatile, as they do not put any restrictions on the distance function $D$. Although most of the work in this area is purely empirical, a recent paper~\cite{indykxu2024worst} gives approximation and running time guarantees for some of those algorithms. 

\subsection{Our Results} 
We give a positive answer to the aforementioned question, by designing a variant of the DiskANN algorithm that reports approximate nearest neighbors of a given query satisfying diversity constraints. Our theoretical analysis assumes the same setup as in~\cite{indykxu2024worst}. Specifically, we assume that the input point set $P$ has bounded {\em doubling dimension}\footnote{Doubling dimension is a measure of the intrinsic dimensionality of the pointset - see Preliminaries for the formal definition.} $d$, and that its aspect ratio (the ratio of the diameter to the closest pair distance) is at most $\Delta$. Under this assumption, we show that the query time of our data structures is polynomial in $k$, $\log n$ and $\log \Delta$.

Formally, our result is as follows. (Here we state the result in the simplest  setting, where the diversity is induced by point colors and $k'=1$. (See Theorem \ref{thm:diverse_ann} for the general result statement.)

\begin{theorem}\label{thm:colorful_ann}
 Consider the data structure constructed by Algorithm~\ref{alg:color-indexing}. Given a query $q$, let $R$ be the radius of the smallest ball around $q$ (w.r.t. the metric $D$) which contains $k$ points of different colors. Then the search Algorithm~\ref{alg:color-search} returns a set $S$ of $k$ points of different colors such that, for all $p \in S$,
 \[ D(q,p) \le \left(\frac{\alpha+1}{\alpha-1}+\epsilon\right) R. \]
 The search algorithms takes $O\left(k\log_{\alpha}\frac{\Delta}{\epsilon}\right)$ steps, where each step takes $\tilde{O}\left(k(8\alpha)^d\log \Delta\right)$ time.
The data structure uses space $O(n k(8\alpha)^d\log\Delta)$.
\end{theorem}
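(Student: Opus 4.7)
The natural strategy is to view Algorithm~\ref{alg:color-search} as $k$ sequential color-restricted greedy searches: in round $i$ the algorithm finds an approximate nearest neighbor of $q$ among the points whose color has not yet been added to $S$. Three things must then be established: (i) a color-aware $\alpha$-shortcut property of the graph produced by Algorithm~\ref{alg:color-indexing}, (ii) a per-round approximation bound via the classical Vamana/DiskANN termination argument, and (iii) the step, degree, and space bounds via a doubling-dimension packing argument together with a factor-$k$ blow-up that covers the color restriction.

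\textbf{Per-round approximation.} Let $C_i$ be the set of colors already placed into $S$ at the start of round $i$, so $|C_i|=i-1<k$. Since the optimal diverse set $S^*$ has $k$ distinct colors all within distance $R$ of $q$, at least one $p_i^* \in S^*$ has color outside $C_i$; hence the color-restricted nearest-neighbor distance $d_i^* := D(q,p_i^*)$ satisfies $d_i^* \le R$. I would first verify the color-aware $\alpha$-shortcut property: for every vertex $u$ and every exclusion set $C$ that can arise during the search, $u$ has an edge to some neighbor $v$ of color $\notin C$ such that $D(v,p)\le D(u,p)/\alpha$ for the relevant targets $p$ of color $\notin C$. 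Given this, a direct transcription of the Vamana termination argument of~\cite{indykxu2024worst} to the color-restricted subgraph shows that round-$i$ greedy search halts at some $p_i$ of new color with $D(q,p_i) \le \tfrac{\alpha+1}{\alpha-1}\, d_i^* \le \tfrac{\alpha+1}{\alpha-1}\, R$; the additive $\epsilon R$ slack absorbs the tolerance used to terminate the search at scale $\epsilon R$.

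\textbf{Step, degree, and space counts.} By the color-aware $\alpha$-shortcut property each greedy step decreases the distance to $p_i^*$ by at least a factor $\alpha$, so $O(\log_\alpha(\Delta/\epsilon))$ steps suffice per round and $O(k\log_\alpha(\Delta/\epsilon))$ in total. For the per-vertex degree, the standard doubling-dimension packing argument yields $O((8\alpha)^d)$ representative neighbors per distance scale, over $O(\log \Delta)$ scales. To handle the color restriction, the construction additionally keeps $k$ representatives per scale (informally, one suitable for each possible ``next color''), giving $O(k(8\alpha)^d \log \Delta)$ edges per vertex, hence the claimed space $O(nk(8\alpha)^d \log\Delta)$ and per-step cost $\tilde O(k(8\alpha)^d\log\Delta)$.

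\textbf{Main obstacle.} The crux is the color-aware shortcut property: showing that a single Vamana-style pruning procedure, with only an $O(k)$ blow-up in degree, simultaneously serves \emph{all} exclusion sets $C$ that can actually arise during Algorithm~\ref{alg:color-search}, rather than an exponential number of arbitrary subsets. I expect the construction exploits $|C|<k$ together with per-color ``committee'' of representatives at each scale/direction, collapsing the potentially $2^k$ regimes to $k$ per-color commitments. Verifying that these per-color commitments combine correctly---i.e., that the pruning rule produces, at every vertex, a surviving neighbor of color outside \emph{any} feasible $C_i$ that still lies in the packing net of every relevant direction---is the delicate point that the proof will have to handle carefully.
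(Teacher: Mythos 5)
There is a genuine gap, on two levels. First, the crux you yourself flag --- the ``color-aware $\alpha$-shortcut property'' --- is exactly the key lemma of the paper (its Lemma~\ref{lm:update_color}), and you leave it unproven, speculating about per-color ``committees'' and worrying about $2^k$ exclusion sets. The paper's argument is both simpler and essential: the pruning rule in Algorithm~\ref{alg:color-indexing} stores, for each surviving representative $u$ of $p_k$, a set $\mathsf{rep}[u]$ of up to $k$ points of \emph{distinct} colors within distance $D(p_k,u)/(2\alpha)$ of $u$. Then for the specific exclusion set that actually arises (the $k-1$ colors of $\mathsf{ALG}\setminus p_k$, by Proposition~\ref{prop:p_star_exists_diverse} there is a $p^*\in\mathsf{OPT}\setminus\mathsf{ALG}$ whose color is new), either $p_k$ has an edge to $p^*$, or $p^*$ was swallowed by some $u$, in which case $\mathsf{rep}[u]$ contains either a point of the same (new) color as $p^*$ or $k$ distinct colors, and a pigeonhole against the $k-1$ used colors produces an admissible neighbor $p'$ with $D(p',q)\le D(p_k,q)/\alpha+\mathsf{OPT}_k(1+1/\alpha)$ via the triangle inequality and the ordering $D(p_k,u)\le D(p_k,p^*)$. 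There is no need to serve all feasible exclusion sets; only sets of at most $k-1$ colors ever arise, and $k$ distinct-color representatives per ball suffice. Without this argument your proof has no content at its central step.

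Second, your decomposition into $k$ sequential color-restricted searches does not describe Algorithm~\ref{alg:color-search}, which initializes a full colorful $k$-set and at each of $T$ steps replaces the current \emph{farthest} point by a neighbor of the farthest point, preserving colorfulness; the replacement need not home in on any fixed target $p_i^*$, and the per-step guarantee is the affine contraction $D\mapsto D/\alpha+\mathsf{OPT}_k(1+1/\alpha)$, not a pure factor-$\alpha$ decrease toward $p_i^*$. Consequently the step bound is not ``$O(\log_\alpha(\Delta/\epsilon))$ per round''; the paper needs an amortized counting argument (if the final farthest slot were updated fewer than $T/k$ times, some other slot was updated more than $T/k$ times, contradicting that the algorithm always works on the farthest slot), followed by a three-case analysis comparing the initial distance to $2D_{\max}$ and $\mathsf{OPT}_k$ to $D_{\min}$ to reach $\left(\frac{\alpha+1}{\alpha-1}+\epsilon\right)R$ within $O\left(k\log_\alpha\frac{\Delta}{\epsilon}\right)$ steps. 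Your degree/space accounting ($O(k(8\alpha)^d\log\Delta)$ per vertex via rings and doubling, with a factor $k$ for the representatives) does match the paper's Lemma~\ref{lm:degree_color}, but the approximation and step-count portions of your proposal would have to be rebuilt around the algorithm as actually stated.
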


We note that the approximation factor with respect to $D$, as well as the running time bounds, are essentially the same as the bounds obtained in~\cite{indykxu2024worst} for the 
non-diverse approximate nearest neighbor problem. 
The main difference is that the bound in~\cite{indykxu2024worst} does not depend on $k$, as they consider only the case of $k=1$.

As mentioned earlier, Theorem~\ref{thm:colorful_ann} generalizes to arbitrary $k'$ and general diversity metric $\rho$, as discussed later.

\paragraph{Experimental results.} 
We adapt our theoretical algorithms to devise fast heuristics based on them, and show the efficiency of our algorithms on several realistic scenarios. In one of them, we consider the task of showing ads to a user based on their search queries. Given a number $k$, say $100$, of available slots, the goal is to choose ads from a large corpus, such that the {\em sellers} (i.e., colors) of those ads are all different. A more relaxed constraint requires that the number of ads shown from a single seller be bounded by at most $k'$, say $10$. To achieve $95\%$ recall$@100$ on this real-world scenario, the prevailing baseline approach of retrieving a much larger number of candidates using a regular ANNS index and then choosing the best diversity-preserving $k$-sized subset of them has latency upwards of 8ms; our algorithm, on the other hand achieves a similar recall at a latency of around 1.5ms, resulting in an improvement upwards of {\bf 5X}! { A production-quality implementation of our algorithm is currently under development for serving large-scale workloads at one of the major technology companies.}


\paragraph{Generalizations.}
On the theoretical side, we extend our results in several directions listed below. These are shown and proved in Section \ref{sec:general-algo}.
\begin{itemize}
    \item \textbf{Relaxing the diversity requirement.} First, in some applications, the requirement that {\em all} reported results have different colors is too strong. (For example, the aforementioned application to search~\cite{google-div} allows for two points having the same color.)  Therefore, we consider a more general constraint, requiring that no color should appear more than $k'$ times. We show how our results can be generalized to any $1\leq k'\leq k$.
    \item \textbf{Generic metric $\rho$.\footnote{In fact our algorithms work for $\rho$ being any {\em pseudo-metric} allowing $\rho(p_1,p_2)=0$ for two different points $p_1,p_2\in P$, but for simplicity we refer to it as metric, throughout the paper.}} Second, we generalize our results to the case where diversity is defined according to a given metric $\rho$ (also defined over $X$ but potentially different from $D$). Here, given a required diversity parameter $C$, the goal is to report a set $S$ of $k$ closest points to the query such that $min_{p_1,p_2\in S} \rho(p_1,p_2)\geq C$.
    We say that such a set $S$ is {\em $C$-diverse}.
    Note that the color version is the special case where $\rho(p_1,p_2)$ is defined to be $0$ if $p_1$ and $p_2$ are of the same color, and $1$ otherwise.
    \item \textbf{Unifying the two generalizations.} In order to unify the above two results, and incorporate the notion of $k'$ into the generic metric $\rho$, we allow for each point in the reported set $S$ to be ``similar" to at most $k'>1$ other points in $S$. More formally, for any $p \in S$ there should be at most $k'-1$ other points $p' \in S$ such that $\rho(p,p') < C$. We say that such a set $S$ is {\em $(k',C)$-diverse}\footnote{We note that the notion of $(k',C)$-diverse set is a new notion of diversity that strictly generalizes the widely used minimum-pairwise-distance notion for diversity.}. We show how our algorithms can be modified to this most general formulation of the problem. 
    \item \textbf{Primal vs Dual formulations.} Finally, instead of asking for the closest $k$ points to the query satisfying a diversity requirement parameterized by $C$, which we refer to as the \emph{primal} variant of the problem, one can ask the {\em dual} question: Given a radius $R$, find a set of $k$ points within radius $R$ of the query, while maximizing the diversity. 
    We show algorithms for both the primal and dual variants of the most general formulation of the problem.
\end{itemize}

\subsection{Related Work}

Nearest Neighbor Search with diversity requirement has been previously studied in the work of~\cite{abbar2013real,abbar2013diverse}, where they presented a ``diversified version'' of the {\em Locality-Sensitive Hashing (LSH)} approach due to~\cite{indyk1998approximate}. However, their diversification approach does not carry over to the graph-based methods.
Moreover, they provide a bi-criteria approximation only for the {\em dual formulation} of the problem, and do not consider the primal formulation.
Finally, the distance functions $D$ that they consider are limited to Hamming distance or its variants like the Jaccard similarity~\cite{abbar2013real}. Although it is plausible that the result could be extended to other distances that are supported by LSH functions, not all distance functions satisfy this constraint.

\section{Preliminaries}
\paragraph{Problem definition.}

Let $(X,D)$ be the underlying metric space, with distance function $D$. 
Let $P$ be our colored point set. For $p\in P$, we use $col[p]$ to denote its {\em color}. 

\begin{definition}[colorful]\label{def:colorful}
A set $S$ is {\em colorful} if no two points in $S$ have the same color.
\end{definition}

\begin{definition}[$k'$-colorful]\label{def:k'-colorful}
A set $S$ is {\em $k'$-colorful}, if within the multi-set $\{col[p_1],...,col[p_k]\}$, no color appears more than $k'$ times. 
\end{definition}

Note that for $k'$-colorful for $k'=1$ is equivalent to the colorful notion.

\begin{definition}\label{def:S_i}
    Given a subset $S\subset P$ of size $k$, and a query $q$, for each $i\leq k$, we use $S_i(q)$ to denote the distance of the $i$th closest point in $S$ to $q$. When $q$ is clear from the context, we drop $q$ and simply use $S_i$.
\end{definition}

\begin{definition}[Colorful NN]
    Given a colored point set $P$, the goal of {\em colorful NN} is to preprocess $P$ and create a data structure such that given a query point $q$, one can quickly return a colorful subset $S\subset P$ of size $k$ such that $S_k$ is minimized. 
\end{definition}

Note that, when $k=1$, our Colorful NN degenerates to the standard nearest neighbor search problem.

\begin{definition}[$k'$-Colorful NN] \label{def:kprime-colorful}
    Given a colored point set $P$, the goal of {\em $k'$-colorful NN} is to preprocess $P$ and create a data structure such that given a query point $q$, one can quickly return a $k'$-colorful subset $S\subset P$ of size $k$ such that $S_k$ is minimized. 
\end{definition}

\paragraph{Balls, doubling dimension, and aspect ratio.}
We use $B_D(p,r)$ to denote a ball centered at $p$ with radius $r$, i.e.,  $B_D(p,r)=\{u\in X: D(u,p)<r\}$. We will drop the subscript $D$ if the metric is clear from the context. 

We say a point set $P$ has \emph{doubling dimension} $d$ if for any point $p$ and radius $r$, the set $B(p,2r)\cap P$ can be covered by at most $2^d$ balls of radius $r$.

\begin{lemma}\label{lm:doubling_dimension}
Consider any point set $P \subset X$ with doubling dimension $d$.

For any ball $B(p,r)$ centered at some point $p\in P$ with radius $r$ and a constant $\alpha$, we can cover $B(p,r)\cap P$ using at most $m\le O(\alpha^d)$ balls with radius smaller than $r/\alpha$, i.e. $B(p,r)\cap P \subset \bigcup_{i=1}^m B(p_i,r/\alpha)$  for some $p_1 \ldots p_m \in X$. 
\end{lemma}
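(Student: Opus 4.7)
The plan is to prove this by iterating the definition of doubling dimension. The definition directly gives us covers by balls of half the radius, so we want to apply it $O(\log \alpha)$ times to shrink the radius by a factor of $\alpha$, paying a multiplicative factor of $2^d$ at each step.

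More concretely, I would proceed as follows. First, observe that $B(p, r) \cap P \subseteq B(p, 2 \cdot (r/2)) \cap P$, so the doubling property applied with radius $r/2$ gives a cover of $B(p,r) \cap P$ by at most $2^d$ balls of radius $r/2$. Now apply the doubling property again to each of these balls: for any center $q$, we have $B(q, r/2) \cap P \subseteq B(q, 2 \cdot (r/4)) \cap P$, which is covered by at most $2^d$ balls of radius $r/4$. Iterating this argument $t$ times yields a cover of $B(p,r) \cap P$ by at most $(2^d)^t = 2^{dt}$ balls of radius $r/2^t$.

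I would then choose $t$ to be the smallest integer such that $r/2^t < r/\alpha$, namely $t = \lfloor \log_2 \alpha \rfloor + 1 \le \log_2 \alpha + 1$. This gives a cover by at most
\[
2^{dt} \le 2^{d(\log_2 \alpha + 1)} = 2^d \cdot \alpha^d = (2\alpha)^d
\]
balls of radius strictly less than $r/\alpha$, which is $O(\alpha^d)$ as claimed (with the constant absorbing the factor $2^d$, treating $d$ as fixed in the usual sense for doubling-dimension arguments).

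I do not foresee a real obstacle here; the argument is a routine induction on the number of iterations of the doubling property. The only minor care needed is to (i) handle the strict inequality ``$<$ $r/\alpha$'' correctly in choosing $t$, and (ii) note that the definition of doubling dimension does not require the cover centers to lie in $P$, which matches the lemma's statement that $p_1, \ldots, p_m \in X$. A clean write-up can simply state the induction hypothesis ``$B(p,r) \cap P$ can be covered by $2^{dt}$ balls of radius $r/2^t$'' and advance the induction by applying the doubling property to each ball in the current cover.
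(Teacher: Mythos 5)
Your argument is correct and is the standard iterated-doubling argument: applying the doubling property $\lfloor\log_2\alpha\rfloor+1$ times yields at most $(2\alpha)^d = O(\alpha^d)$ balls of radius strictly below $r/\alpha$, which is exactly what the lemma asserts (the paper itself states this lemma without proof, treating it as a standard consequence of the definition). Your two points of care --- choosing $t$ so that $r/2^t < r/\alpha$ strictly, and noting that centers may lie in $X$ rather than $P$, consistent with the paper's definition which allows doubling at any point $p$ --- are the right ones, so nothing is missing.
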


We define $\Delta=\frac{D_{max}}{D_{min}}$ to be the \emph{aspect ratio} of the point set $P$ where $D_{max}$($D_{min}$, resp.) represents the maximal (minimal, resp.) distance between any two points in the point set $P$.


\section{Algorithm for Colorful NN}\label{sec:main-algo}
In this section, for the sake of simplicity of presentation, 
we focus on the simplest setting where $k'=1$, and $\rho$ is the binary metric. The binary setting corresponds to our main application of seller diversity, and the case of $k'=1$ focuses on retrieving $k$ closest points from the data set such that \emph{all} of them have different colors/sellers. 
The algorithm that handles the general setting is presented in Section \ref{sec:general-algo}. 

\paragraph{Intuition behind the algorithm.} 
At a high level, the ``slow pre-processing'' algorithm of~\cite{indykxu2024worst} uses the following intuition when pruning the graph: If $u$ and $v$ are ``much closer'' to each other than to another point $p$, then it is not necessary to connect $p$ to both $u$ and $v$. 
This makes it possible to track the progress of the search procedure as it identifies points closer to the query point and use the doubling dimension to bound the degree of the graph and the total space. 
Our algorithm retains these insights, but also requires several new ones, as now we need to show that the search algorithms can progress {\em while maintaining a colorful solution}. On a high-level, this is established using the following two intuitions. 
First, if the colors of $u$ and $v$ are the same, then again there is no need to connect $p$ to both of them, and we can use the same pruning as before. 
Second, if $p$ is already connected to $k$ points $v_1,\cdots, v_k$, all of which are much closer to $u$ compared to $p$, and that they all have different colors, then again there is no need to connect $p$ to $u$. This is roughly because no solution would need more than $k$ points of different colors in a relatively small neighborhood. 
The main challenge in converting these intutions into a formal argument is in showing that such a graph keeps enough edges for a greedy search algorithm to converge to an approximately optimal solution for the coloful NN problem.

\subsection{The Preprocessing Algorithm} 
The indexing algorithm is shown in Algorithm~\ref{alg:color-indexing}. 
\begin{algorithm}
\caption{Indexing algorithm for colorful NN}
\label{alg:color-indexing}
\begin{algorithmic}[1]
\STATE \textbf{Input}: A set of $n$ points $P=\{p_1,...,p_n\}$; $k$ is the size of the output; $\alpha$ is the parameter used for pruning.
\STATE \textbf{Output}: A directed graph $G=(V,E)$ where $V=\{1,...,n\}$ are associated with the point set $P$.
\FOR{each point $p\in P$}
    \STATE Sort all points $u\in P$ based on their distance from $p$ and put them in a list $\mathcal{L}$ in that order.
    \WHILE{$\mathcal{L}$ is not empty}
        \STATE $u\gets \argmin\limits_{u\in \mathcal{L}}D(u,p)$
        \STATE Initialize $\mathsf{rep}[u]\gets \{u\}$
        \FOR{each point $v\in \mathcal{L}$ in order}
            \IF{$D(u,v)\le D(p,u)/(2\alpha)$}
                \IF{$col[v]$ not shown in $rep[u]$ and $|rep[u]|<k$}
                    \STATE $\mathsf{rep}[u]\gets \mathsf{rep}[u]\cup v$ 
                \ENDIF
                \STATE remove $v$ from $\mathcal{L}$
            \ENDIF
        \ENDFOR
        \STATE add edges from $p$ to $\mathsf{rep}[u]$
        \STATE Remove $u$ from $\mathcal{L}$
    \ENDWHILE
\ENDFOR
\end{algorithmic}
\end{algorithm}

\begin{lemma}\label{lm:degree_color}
The graph constructed by Algorithm~\ref{alg:color-indexing} has degree limit $O(k(8\alpha)^d\log\Delta)$. 
\end{lemma}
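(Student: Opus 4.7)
}

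Fix a point $p \in P$, and observe that the out-degree of $p$ in the constructed graph equals $\sum_{u} |\mathsf{rep}[u]|$, where the sum is over the set $A \subseteq P$ of ``anchors'' $u$ selected in the outer \textbf{while}-loop at Line~6. Since $|\mathsf{rep}[u]| \le k$ by construction, it suffices to show that $|A| = O((8\alpha)^d \log \Delta)$. The plan is to bucket the anchors by their distance to $p$ into $O(\log \Delta)$ dyadic rings and then use the doubling-dimension property (Lemma~\ref{lm:doubling_dimension}) to bound the number of anchors within each ring.

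For each integer $i$ with $D_{min} \le 2^i \le 2 D_{max}$, let $A_i := \{ u \in A : 2^i D_{min} \le D(p,u) < 2^{i+1} D_{min}\}$. Since the aspect ratio is $\Delta$, there are at most $O(\log \Delta)$ nonempty buckets. The key separation claim is: \emph{any two distinct anchors $u_1, u_2 \in A_i$ satisfy $D(u_1, u_2) > 2^{i} D_{min} / (2\alpha)$.} To see this, suppose without loss of generality that $u_1$ is chosen before $u_2$ in the outer loop. When $u_1$ was chosen, the test at Line~9 was performed for every $v \in \mathcal{L}$; in particular, since $u_2$ was not removed (it is selected later as an anchor), we must have $D(u_1, u_2) > D(p, u_1)/(2\alpha) \ge 2^i D_{min}/(2\alpha)$.

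Now all anchors of $A_i$ lie in the ball $B(p, 2^{i+1} D_{min})$, which has radius $r := 2^{i+1} D_{min}$. Applying Lemma~\ref{lm:doubling_dimension} with factor $8\alpha$, we can cover $B(p, r) \cap P$ by $O((8\alpha)^d)$ balls of radius less than $r/(8\alpha) = 2^i D_{min}/(4\alpha)$. By the separation claim, each such small ball contains at most one anchor of $A_i$ (two anchors in the same small ball would be within distance $2 \cdot 2^i D_{min}/(4\alpha) = 2^i D_{min}/(2\alpha)$, contradicting the strict inequality). Hence $|A_i| = O((8\alpha)^d)$, and summing over the $O(\log \Delta)$ buckets gives $|A| = O((8\alpha)^d \log \Delta)$, from which the claimed degree bound $O(k(8\alpha)^d \log \Delta)$ follows.

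The main subtlety to handle carefully is the separation claim: we must verify that when $u_1$ is processed as anchor, the test distance is $D(p, u_1)/(2\alpha)$ rather than something involving $u_2$, and that $u_2$ indeed survives in $\mathcal{L}$ to be considered in a later iteration. This is where the ordering of $\mathcal{L}$ (closest-to-$p$ first) and the fact that anchors are always the nearest surviving points are essential. Everything else is a routine application of the doubling-dimension covering bound and a geometric series over $O(\log \Delta)$ scales.
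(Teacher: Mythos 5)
Your proof is correct and follows essentially the same approach as the paper's: factor out the $k$ from $|\mathsf{rep}[u]|\le k$, bucket anchors into $O(\log\Delta)$ dyadic rings, cover each ring by $O((8\alpha)^d)$ small balls via Lemma~\ref{lm:doubling_dimension}, and argue the pruning rule leaves at most one anchor per small ball. You are a bit more explicit than the paper about the separation claim (that two surviving anchors $u_1,u_2$ in the same ring are at distance $> D(p,u_1)/(2\alpha)$ because the closer anchor would otherwise have removed the farther one), which is the right detail to spell out, but the underlying argument is the same.
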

\begin{proof}
Let's first bound the number of points not removed by others, then according to Line 10 in Algorithm~\ref{alg:color-indexing}, the degree bound will be that times $k$.

We use $\mathsf{Ring}(p,r_1,r_2)$ to denote the points whose distance from $p$ is larger than $r_1$ but smaller than $r_2$. For each $i\in [\log_2 \Delta]$, we consider the $\mathsf{Ring}(p,D_{max}/2^i,D_{max}/2^{i-1})$ separately. According to Lemma~\ref{lm:doubling_dimension}, we can cover $\mathsf{Ring}(p,D_{max}/2^i,D_{max}/2^{i-1})\cap P$ using at most $m\le O((8\alpha)^d)$ small balls of radius $D(p,u)/(4\alpha)\geq \frac{D_{max}}{2^{i+2}\alpha}$. According to the pruning criteria in Line 9, within each small ball, there will be at most one such point $u$ remaining. This establishes the degree bound of $O(k(8\alpha)^d\log\Delta)$.
\end{proof}

\subsection{The Search Algorithm} Algorithm~\ref{alg:color-search} shows the search algorithm for the colorful nearest neighbor search problem. 
Next, we analyze the search algorithm and finally prove Theorem~\ref{thm:colorful_ann}.

\begin{algorithm}[ht]
\caption{Search algorithm for colorful NN}
\label{alg:color-search}
\begin{algorithmic}[1]
\STATE \textbf{Input}: A graph $G=(V,E)$ with $N_{out}(p)$ being the out edges of $p$; query $q$; number of optimization steps $T$.
\STATE \textbf{Output}: A set of $k$ points $\mathsf{ALG}$.
\STATE Initialize $\mathsf{ALG}=\{p_1,...,p_k\}$ to be any set of $k$ points with different colors.
\FOR{$i=1$ to $T$}
    \STATE $p_k\gets$ the furthest point in $\mathsf{ALG}$ from $q$.
    \STATE $U\gets N_{out}(p_k)$ and sort $U$ based on their distance from $q$
    \FOR{each point $u\in U$}
        \IF{$\mathsf{ALG}\setminus p_k \bigcup u$ is colorful}
            \STATE $\mathsf{ALG}\gets \mathsf{ALG}\setminus p_k \bigcup u$
            \STATE Break
        \ENDIF
    \ENDFOR
\ENDFOR
\STATE \textbf{Return} $\mathsf{ALG}$
\end{algorithmic}
\end{algorithm}

\begin{proposition}\label{prop:p_star_exists_diverse}
Let $\mathsf{OPT}=\{p^*_1,...,p^*_k\}$ be a colorful solution with minimized $\mathsf{OPT_k}$, and $\mathsf{ALG}=\{p_1,...,p_k\}$ be the current solution (both ordered by distance from $q$). If $p_k\notin \mathsf{OPT}$, there exists a point $p^*\in \mathsf{OPT}\setminus \mathsf{ALG}$ such that $\mathsf{ALG}\setminus p_k\bigcup p^*$ is colorful. 
\end{proposition}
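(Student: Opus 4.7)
The plan is to prove this by a direct pigeonhole argument on colors. Since both $\mathsf{OPT}$ and $\mathsf{ALG}$ are colorful, each contributes $k$ and $k$ distinct colors respectively, and we only need to find one point in $\mathsf{OPT} \setminus \mathsf{ALG}$ whose color is absent from the $k{-}1$ colors of $\mathsf{ALG} \setminus \{p_k\}$.

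More concretely, I would set $J := \mathsf{OPT} \cap \mathsf{ALG}$ and first observe that $p_k \notin J$ (this is exactly the hypothesis $p_k \notin \mathsf{OPT}$), so $J \subseteq \mathsf{ALG}\setminus\{p_k\}$. Let $C_A$ denote the set of colors appearing in $\mathsf{ALG}\setminus\{p_k\}$; because $\mathsf{ALG}$ is colorful, $|C_A| = k-1$, and the colors of $J$ form a subset of $C_A$ of size $|J|$. Similarly, $\mathsf{OPT}$ contributes $k$ distinct colors, of which $|J|$ coincide with the colors of $J$, and the remaining $k - |J|$ colors belong to points in $\mathsf{OPT} \setminus J = \mathsf{OPT}\setminus\mathsf{ALG}$.

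The pigeonhole step is then: since $\mathsf{OPT}$ is colorful, the colors appearing on $\mathsf{OPT}\setminus\mathsf{ALG}$ are disjoint from the colors of $J$, so at most $|C_A| - |J| = (k-1) - |J|$ of them can lie in $C_A$. Therefore at least $(k-|J|) - (k-1-|J|) = 1$ point $p^* \in \mathsf{OPT} \setminus \mathsf{ALG}$ has $\mathrm{col}[p^*] \notin C_A$. Swapping $p_k$ for this $p^*$ introduces a fresh color, so $\mathsf{ALG}\setminus\{p_k\}\cup\{p^*\}$ remains colorful.

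The argument is essentially a one-line counting exercise, so there is no serious obstacle; the only subtlety worth being careful about is ensuring the bookkeeping distinguishes ``point overlap'' ($J$) from ``color overlap'' ($C_A$ vs.\ colors of $\mathsf{OPT}\setminus\mathsf{ALG}$), and using the hypothesis $p_k\notin \mathsf{OPT}$ precisely once to guarantee $J \subseteq \mathsf{ALG}\setminus\{p_k\}$ so that its colors are counted inside $C_A$ rather than being contributed by $p_k$.
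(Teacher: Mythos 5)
Your proposal is correct and follows essentially the same pigeonhole-on-colors argument as the paper; the paper's version is just slightly more direct, picking a point $p^*\in\mathsf{OPT}$ whose color is missing from the $k-1$ colors of $\mathsf{ALG}\setminus\{p_k\}$ and then using $p_k\notin\mathsf{OPT}$ to conclude $p^*\in\mathsf{OPT}\setminus\mathsf{ALG}$, whereas you route the same count through $J=\mathsf{OPT}\cap\mathsf{ALG}$. The bookkeeping is sound, so no changes are needed.
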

\begin{proof}
Observe that throughout the search algorithm, we maintain the property that $\mathsf{ALG}$ is colorful. 
Note that $\mathsf{ALG}\setminus p_k$ has $k-1$ different colors, and $\mathsf{OPT}$ has $k$ different colors. Thus there should be a point $p^*\in OPT$ whose color is different from all points in $\mathsf{ALG}\setminus p_k$. Note that such $p^*$ cannot belong to $\mathsf{ALG}$ and thus belongs to $\mathsf{OPT}\setminus\mathsf{ALG}$.
\end{proof}

\begin{lemma}\label{lm:update_color}
There always exists a point $p'\in N_{out}(p_k)$ (for $p_k$ as defined in Line 5) such that 
\begin{enumerate}[topsep=0cm]
\setlength\itemsep{0cm}
\item $\mathsf{ALG}\setminus p_k \bigcup p'$ is colorful
\item $D(p',q)\le D(p_k,q)/\alpha+\mathsf{OPT_k}(1+1/\alpha)$
\end{enumerate}
\end{lemma}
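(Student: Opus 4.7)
The plan is to invoke Proposition~\ref{prop:p_star_exists_diverse} for a witness $p^*\in\mathsf{OPT}\setminus\mathsf{ALG}$ whose color is missing from $\mathsf{ALG}\setminus\{p_k\}$, then trace what happens to $p^*$ during the inner pass of Algorithm~\ref{alg:color-indexing} when it is run with source $p=p_k$, and from that bookkeeping produce the required $p'\in N_{out}(p_k)$. I focus on the case $p_k\notin\mathsf{OPT}$, when Proposition~\ref{prop:p_star_exists_diverse} applies directly; if $p_k\in\mathsf{OPT}$ then $D(p_k,q)\le\mathsf{OPT_k}$ so bound (2) is slack, and one extracts $p'$ from $N_{out}(p_k)$ by the same color-coverage argument used below.

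Examining Algorithm~\ref{alg:color-indexing} at source $p_k$ yields a clean dichotomy. Either $p^*$ was chosen as a pivot in the while loop, in which case $p^*\in\mathsf{rep}[p^*]\subseteq N_{out}(p_k)$ and $p':=p^*$ immediately satisfies both claims via $D(p^*,q)\le\mathsf{OPT_k}$; or $p^*$ was removed from $\mathcal{L}$ during the inner for-loop of some pivot $u$ processed earlier. In the latter case the while loop's greedy choice of closest remaining point gives $D(p_k,u)\le D(p_k,p^*)$, and the removal criterion gives $D(u,p^*)\le D(p_k,u)/(2\alpha)$.

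To finish the latter case I extract a color-compatible $p'$ from $\mathsf{rep}[u]$. Look at the state of $\mathsf{rep}[u]$ at the moment $p^*$ was considered. If $|\mathsf{rep}[u]|<k$, then $p^*$ was declined only because $col[p^*]$ already appeared in $\mathsf{rep}[u]$, so pick $p'\in\mathsf{rep}[u]$ with $col[p']=col[p^*]$. If instead $|\mathsf{rep}[u]|=k$, then $\mathsf{rep}[u]$ carries $k$ distinct colors while $\mathsf{ALG}\setminus\{p_k\}$ carries only $k-1$, so pigeonhole yields some $p'\in\mathsf{rep}[u]$ whose color is absent from $\mathsf{ALG}\setminus\{p_k\}$. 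Either way $p'\in N_{out}(p_k)$ and $\mathsf{ALG}\setminus\{p_k\}\cup\{p'\}$ is colorful, proving (1). For (2), use $D(u,p')\le D(p_k,u)/(2\alpha)$ (trivial if $p'=u$, by the inclusion criterion otherwise) and apply the triangle inequality twice:
\[
D(p',q)\;\le\;D(p',u)+D(u,p^*)+D(p^*,q)\;\le\;\frac{D(p_k,u)}{\alpha}+\mathsf{OPT_k},
\]
and then $D(p_k,u)\le D(p_k,p^*)\le D(p_k,q)+\mathsf{OPT_k}$ substituted on the right gives exactly $D(p',q)\le D(p_k,q)/\alpha+\mathsf{OPT_k}(1+1/\alpha)$.

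The pruning criterion and its geometric consequences are inherited from the uncolored analysis of~\cite{indykxu2024worst}, so the substantive new step is the third paragraph above: I must argue that keeping up to $k$ representatives per pivot---rather than the single one used in the non-diverse DiskANN analysis---supplies enough color coverage around every $u$ that any color we might need from outside $\mathsf{ALG}\setminus\{p_k\}$ is already wired in as an out-neighbor of $p_k$. The dichotomy on $|\mathsf{rep}[u]|$ (same-color survivor versus pigeonhole over a saturated $\mathsf{rep}[u]$) is precisely what pins down $k$ as the right cap in Algorithm~\ref{alg:color-indexing}, and is where the argument genuinely diverges from the uncolored case.
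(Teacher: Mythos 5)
Your proposal is correct and follows essentially the same route as the paper's proof: invoke Proposition~\ref{prop:p_star_exists_diverse} to get a witness $p^*$, split on whether $p_k$ already has an edge to $p^*$, otherwise pass to the pivot $u$ that removed $p^*$ and use the same-color-representative versus saturated-$\mathsf{rep}[u]$-plus-pigeonhole dichotomy, finishing with the identical triangle-inequality chain. The only nitpick is that your ``removed in the inner loop'' case should also note the sub-case where $p^*$ was itself admitted into $\mathsf{rep}[u]$ (so the edge $(p_k,p^*)$ exists and $p'=p^*$ works directly), which the paper's phrasing ``if there exists an edge from $p_k$ to $p^*$'' absorbs automatically.
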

\begin{proof}
According to Proposition~\ref{prop:p_star_exists_diverse}, for any current solution $\mathsf{ALG}$ with $p_k\notin \mathsf{OPT}$, there exists a point $p^*\in \mathsf{OPT}\setminus \mathsf{ALG}$ such that $\mathsf{ALG}\setminus p_k\cup p^*$ is colorful. 
If there exists an edge from $p_k$
to $p^*$, replacing $p_k$ with $p^*$ is a potential update. We set $p'=p^*$ and $D(p',q)\le \mathsf{OPT_k}$ satisfies the distance upper bound above. 

Otherwise, we let $u$ be the point where $p^*$ was removed when processing $u$ on line 9 in Algorithm~\ref{alg:color-indexing}. Because $p^*$ was not connected from $p_k$, either there exists a point in $\mathsf{rep}[u]$ with the same color, or $\mathsf{rep}[u]$ has already got $k$ points with different colors. In the first case, we can set $p'$ to be the point in $\mathsf{rep}[u]$ with the same color. In the latter case, by pigeon hole principle, there always exists a color in $\mathsf{rep}[u]$ not shown in $\mathsf{ALG}\setminus p_k$. Therefore, we can find a desired $p'\in \mathsf{rep}[u]$ and it is connected to $p_k$.

We have proved that the $p'$ we found satisfies the colorful criteria. Now we will bound its distance upper bound.
\begin{align}
D(p',q)
&\le D(p^*,q)+D(p',p^*) \notag \\
&\le D(p^*,q)+D(p',u)+D(p^*,u) \notag\\
&\le D(p^*,q)+D(p_k,u)/(2\alpha)+D(p_k,u)/(2\alpha) \tag{Line 9 in Algorithm~\ref{alg:color-indexing}}\\
&\le D(p^*,q)+D(p_k,u)/\alpha \notag \\
&\le D(p^*,q)+D(p_k,p^*)/\alpha \tag{Because $u$ is ordered earlier than $p^*$ in Algorithm~\ref{alg:color-indexing}}\\
&\le D(p^*,q)+D(p_k,q)/\alpha+D(p^*,q)/\alpha \notag\\
&\le D(p_k,q)/\alpha+\mathsf{OPT_k}(1+1/\alpha)\notag
\end{align}
\end{proof}
\begin{proof}[Proof of Theorem~\ref{thm:colorful_ann}]
Regarding the running time, the total number of edges connected from any point in $\mathsf{ALG}$ is bounded by $|U|\le O(k(8\alpha)^d\log \Delta)$. In each step, the algorithm first sorts all these edges connected from $p_k\in\mathsf{ALG}$ and then checks whether each of them can be added to the new $\mathsf{ALG}$ set. The total time spent per step is bounded by $O(|U|\log|U|)$. The overall time complexity is $\tilde{O}\left(k(8\alpha)^d\log\Delta\right)$ per step.

To analyze the approximation ratio, at time step $t$, we use $\mathsf{ALG^t}=\{p^t_1,...,p^t_k\}$ to denote the current unordered solution. We denote $\mathsf{ALG^t_k}=\max\limits_{i\in[k]}D(p^t_i,q)$. According to Algorithm~\ref{alg:color-search} and Lemma~\ref{lm:update_color}, if $p_i$ is updated at time step $t$, we have $D(p^t_i,q)\le D(p^{t-1}_i,q)/\alpha+\mathsf{OPT_k}(1+1/\alpha)$. By an induction argument, if a point $p_i$ is updated by $t$ times at the end of time step $T$, we have $D(p^T_i,q)\le \frac{D(p^0_i,q)}{\alpha^t}+\frac{\alpha+1}{\alpha-1}\mathsf{OPT_k}$. 

We now prove that $\mathsf{ALG^T_k}\le \max\limits_{i}\frac{D(p^0_i,q)}{\alpha^{T/k}}+\frac{\alpha+1}{\alpha-1}\mathsf{OPT_k}$. Let $i\in[k]$ be the index achieving the maximal distance upper bound. For the sake of contradiction, if $\mathsf{ALG^T_k}>\frac{D(p^0_i,q)}{\alpha^{T/k}}+\frac{\alpha+1}{\alpha-1}\mathsf{OPT_k}$, this means that $p^T_i$ was updated for at most $T/k-1$ times. By a counting argument, there exists another index $j$ which was updated for at least $T/k+1$ times. However, at the time $t$ when $p^t_j$ was already updated for $T/k$ times, $D(p^t_j,q)\le \frac{D(p^0_j,q)}{\alpha^{T/k}}+\frac{\alpha+1}{\alpha-1}\mathsf{OPT_k} < \mathsf{ALG^T_k}\le \mathsf{ALG^t_k}$, so the algorithm wouldn't have chosen $p^t_j$ to optimize because it couldn't have had the maximal distance at that time, leading to a contradiction. Therefore, we prove that $\mathsf{ALG^T_k}\le \max\limits_{i}\frac{D(p^0_i,q)}{\alpha^{T/k}}+\frac{\alpha+1}{\alpha-1}\mathsf{OPT_k}$.

Now we consider the following three cases depending on the value of the maximal $D(p^0_i,q)$. This case analysis is similar to the proof in Theorem 3.4 from~\cite{indykxu2024worst}.
\begin{enumerate}

\item[Case 1:]  $D(p^0_i,q)>2D_{max}$. 

Let $p^*_k$ be the point having the maximal distance from $q$ in an optimal solution $\mathsf{OPT}$. We know that for any $p^0_i$, we have $D(p^*_k,q)\ge D(p^0_i,q)-D(p^0_i,p^*_k)\ge D(p^0_i,q)-D_{max}\ge D(p^0_i,q)/2$. Therefore, the approximation ratio after $T$ optimization steps is upper bounded by $\frac{\mathsf{ALG^T_k}}{D(p^*_k,q)}\le \frac{D(p^0_i,q)}{D(p^*_k,q)\alpha^{T/k}}+\frac{\alpha+1}{\alpha-1}\le \frac{2}{\alpha^{T/k}}+\frac{\alpha+1}{\alpha-1}$. A simple calculation shows that we can get a $(\frac{\alpha+1}{\alpha-1}+\epsilon)$ approximate solution in $O(k\log_{\alpha}\frac{2}{\epsilon})$ steps.

\item[Case 2:] $D(p^0_i,q)\le 2D_{max}$ and $\mathsf{OPT_k}>\frac{\alpha-1}{4(\alpha+1)}D_{min}$. 

To satisfy $\frac{D(p^0_i,q)}{\alpha^{T/k}}+\frac{\alpha+1}{\alpha-1}\mathsf{OPT_k}\le (\frac{\alpha+1}{\alpha-1}+\epsilon)\mathsf{OPT_k}$, we need $\frac{D(p^0_i,q)}{\alpha^{T/k}}\le \epsilon \mathsf{OPT_k}$. Applying the lower bound $\mathsf{OPT_k}\ge \frac{\alpha-1}{4(\alpha+1)}D_{min}$, we can get that $T\ge k\log_{\alpha}\frac{2(\alpha+1)\Delta}{(\alpha-1)\epsilon}$ suffices.

\item[Case 3:] $D(p^0_i,q)\le 2D_{max}$ and $\mathsf{OPT_k}\le \frac{\alpha-1}{4(\alpha+1)}D_{min}$. 

In this case, we must have $k=1$, because otherwise $D(p^*_k,p^*_1)\le 2D(p^*_k,q)<D_{min}$, violating the definition of $D_{min}$. Suppose $k=1$ and the problem degenerates to the standard nearest neighbor search problem. After $T$ optimization steps, if $p^T_1$ is still not the exact nearest neighbor, we have $D(p^T_1,q)\ge D(p^T_1,p^*_1)-\mathsf{OPT_1}\ge \frac{D_{min}}{2}$. Applying the upper bound of $D(p^T_1,q)$ and $\mathsf{OPT_1}$, we have $\frac{D_{min}}{2}\le D(p^T_1,q)\le \frac{D(p^0_1,q)}{\alpha^{T}}+\frac{\alpha+1}{\alpha-1}\mathsf{OPT_1}\le \frac{D(p^0_1,q)}{\alpha^{T}}+\frac{D_{min}}{4}$. This can happen only if $T\le \log_{\alpha}\frac{\Delta}{8}$.

In conclusion, $O(k\log_{\alpha}\frac{\Delta}{\epsilon})$ steps suffice to achieve the desired approximation ratio in Theorem~\ref{thm:diverse_ann}.
\end{enumerate}
\end{proof}

\subsection{High-level Intuition about the Generalizations}
Given our results on colorful NN, it is relatively simple to extend them to the $k'$-colorful NN version with the same bounds. \emph{One key contribution is to demonstrate that, in the graph degree bound, the overhead factor $k$ can be reduced to  $k/k'$ while preserving the approximation quality}. This reduces both the query time bound and the overall space used by the algorithm. This improvement is tight, in the following sense: When $k' = 1$, we recover the bound for colorful NN problem, and when $k' = k$, we recover the standard $k$-NN bound, where no additional factor is needed.

For our algorithm to work with a generic diversity metric $\rho$, we use an intuition similar to that in colorful case. However, instead of pruning an edge from the point $p$ to the point $u$ when $p$ is already connected to representative vectors $v_1,\cdots,v_k$ of different colors, we now choose the representatives based on the diversity metric $\rho$. We find a diverse subset of points in the neighborhood of $u$ (e.g., using the greedy Gonzales algorithm for the k-center problem) and connect $p$ only to those selected points $v_1,\cdots,v_k$.
Again, the main challenge is to demonstrate that a greedy search algorithm can converge to an approximately optimal solution, given the set of edges we retain. The difficulty lies in the fact that we can only maintain an {\em approximately} diverse subset $\mathsf{ALG}$, in contrast to the colorful version, where we only needed $\mathsf{ALG}$ to contain $k$ different colors.  As the algorithm proceeds with further iterations, the technical difficulty lies in ensuring that the approximation factor does not grow depending on the number of iterations.
\newcommand{\diversesearch}{\ensuremath{\mathrm{DiverseSearch}}}
\newcommand{\diverseindex}{\ensuremath{\mathrm{DiverseIndex}}}
\newcommand{\diverseprune}{\ensuremath{\mathrm{DiversePrune}}}
\newcommand{\diversequeue}{\ensuremath{\mathrm{DiversePriorityQueue}}\xspace}
\newcommand{\blockers}{\mathsf{blockers}\xspace}

\section{Experimental Evaluation}
In this section we provide an empirical evaluation of our methods. To this end, we first devise a heuristic adaptation based on our provable algorithms for the {\em $k'$-colorful NN} problem as in~\Cref{def:kprime-colorful}. As we note below, this problem itself captures several real-world notions of diversity. 

At a high level, the difference between our heuristics and our theoretical algorithms is similar to the difference between the fast- and slow-preprocessing algorithms in DiskANN~\cite{jayaram2019diskann,indykxu2024worst}). Indeed, we deploy the same construction as in the fast-preprocessing variant of DiskANN, but modify the pruning procedure to insist that any node $u$ has sufficiently many colorful out-neighbors before an edge $(u,v)$ can get pruned, in addition to the geometric condition for pruning as in the original algorithm~\cite{DiskANN}. The number of colorful edges that  are needed before pruning can occur is given by a tunable parameter $m$ in our algorithm, and indeed this is the direct heuristic analog of step 10 in~\Cref{alg:color-indexing}.  This is a very high-level description, and we refer the interested reader to~\Cref{sec:impl} for the complete pseudo-code of our heuristic algorithms.


Second, we run experiments using our heuristics	on several different datasets, both real-world as well as synthetically generated. We show how our heuristic consistently delivers superior recall for a fixed latency budget, across datasets when compared to a natural baseline of using a vanilla DiskANN algorithm and enforcing diversity only via a final post-processing. We stress that both our real-world datasets are motivated from important shopping scenarios: the data points represent products and a color of a vector corresponds to either the seller or the brand of the product. 
It is then desirable to output results from a diverse set of sellers/brands~\cite{google-div}. Intuitively, displaying diverse results would lead to increased competition between the sellers, and also simultaneously higher click probabilities, thereby leading to an increase in revenue of the exchange.%

\subsection{Experiment Setup} 
All experiments were run on a Linux Machine with AMD Ryzen Threadripper 3960X 24-Core Processor CPU's @ 2.3GHz with 48 vCPUs and 250 GB RAM. All query throughput and latency measurements are reported for runs with 48 threads.

\begin{figure*}[t]
     \centering
    \begin{minipage}{0.40\textwidth}
    \includegraphics[width=\textwidth]{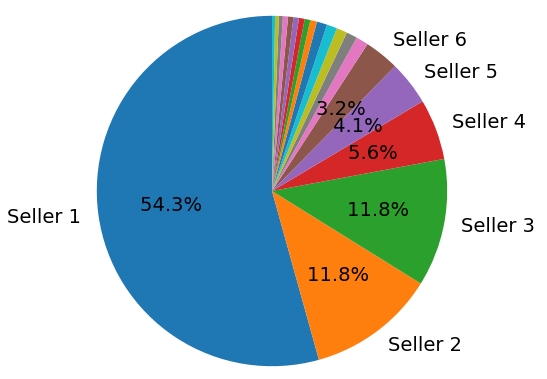}
\caption{Seller distribution in a real-world dataset with 20 million base vectors, where the top 7 sellers constitute more than 90\% of the data.}
\label{fig:percentage}
    \end{minipage}
    \hspace{1cm}
    \begin{minipage}{0.50\textwidth}
    \includegraphics[width=\textwidth]{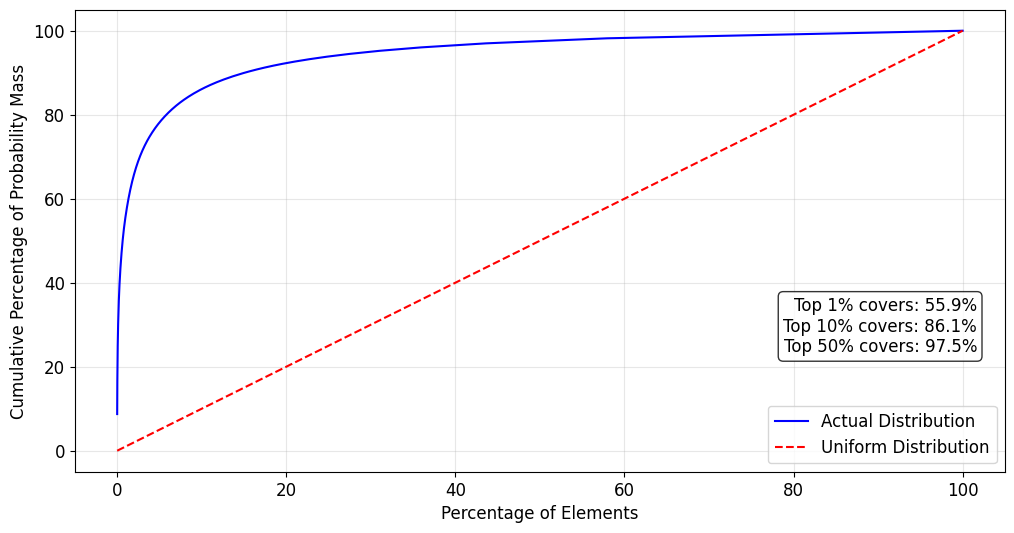}
\caption{Brand cumulative distribution for Amazon dataset, showing the coverage of the vectors by the brands in sorted order. The top $10\%$ of brands cover $86\%$ of the vectors.}
\label{fig:percentage2}
    \end{minipage}
\end{figure*}

\begin{figure*}[!ht]
     \centering
    \begin{minipage}{0.42\textwidth}
        \centering
\includegraphics[width=0.8\textwidth]{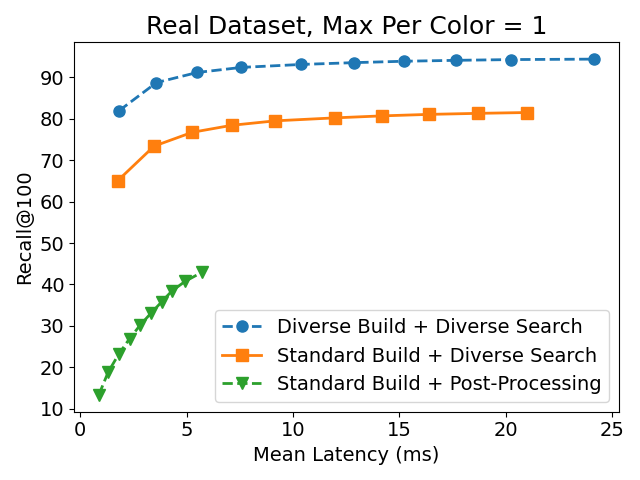}

\includegraphics[width=0.8\textwidth]{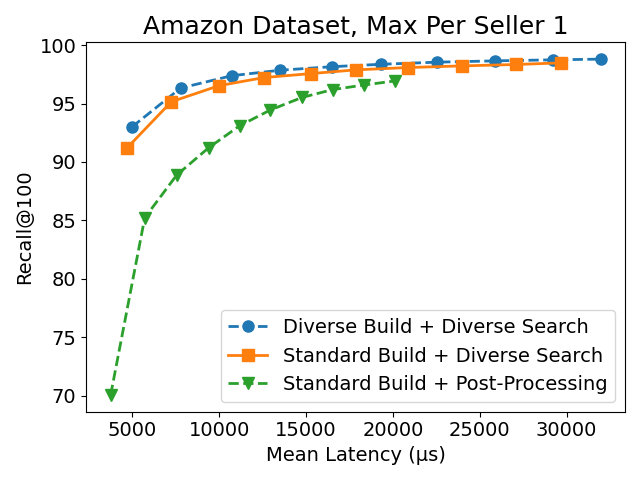}

\includegraphics[width=0.8\textwidth]{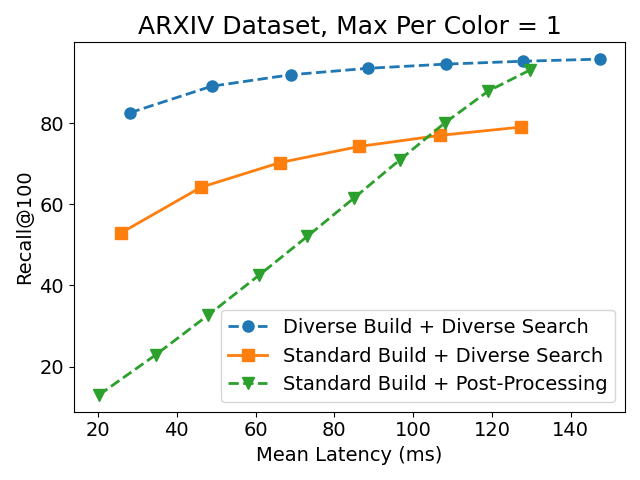}

\includegraphics[width=0.8\textwidth]{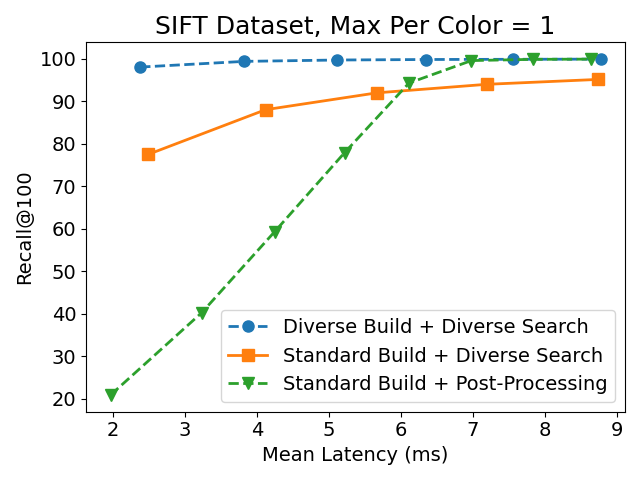}

\includegraphics[width=0.8\textwidth]{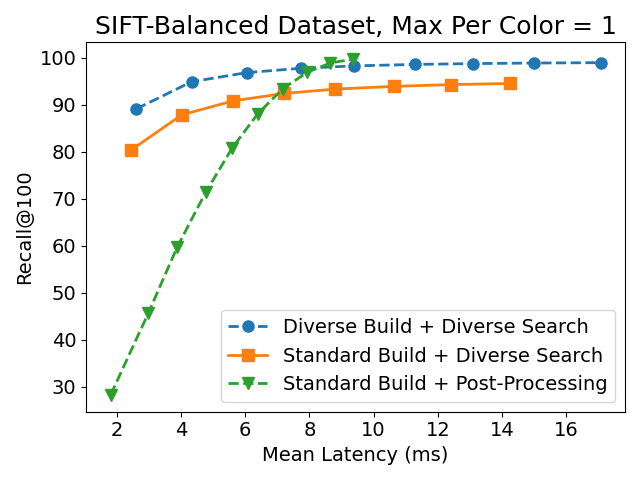}
    \caption{Recall vs Latency for $k'=1$.}
 \label{fig:recall-latency-1}
    \end{minipage}   
    \begin{minipage}{0.42\textwidth}
        \centering
\includegraphics[width=0.8\textwidth]{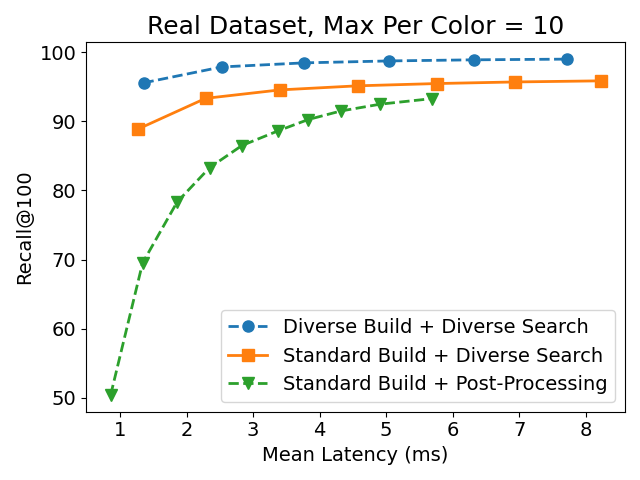}

\includegraphics[width=0.8\textwidth]{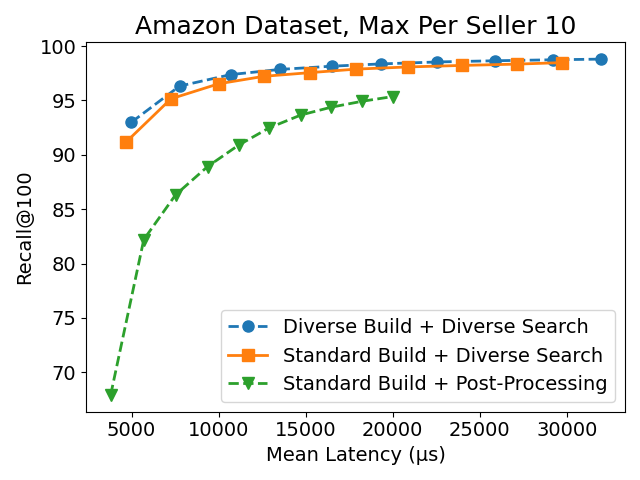}

\includegraphics[width=0.8\textwidth]{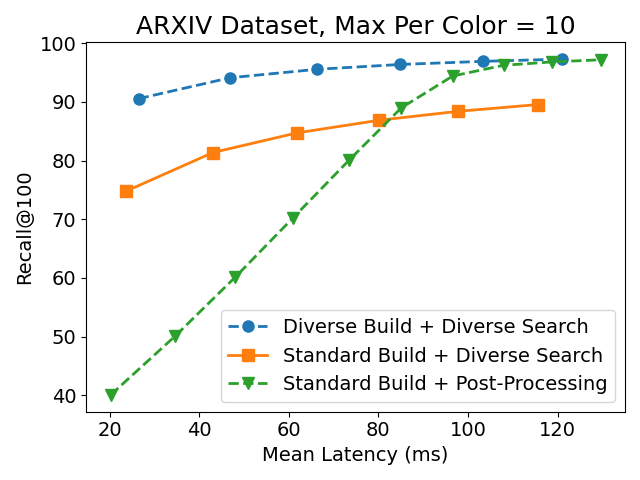}

\includegraphics[width=0.8\textwidth]{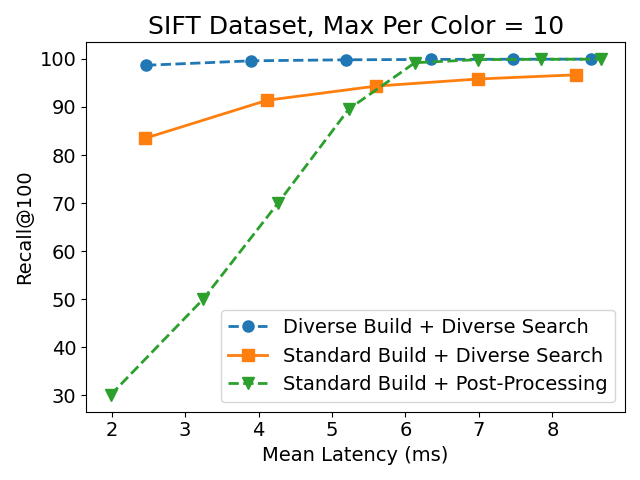}

\includegraphics[width=0.8\textwidth]{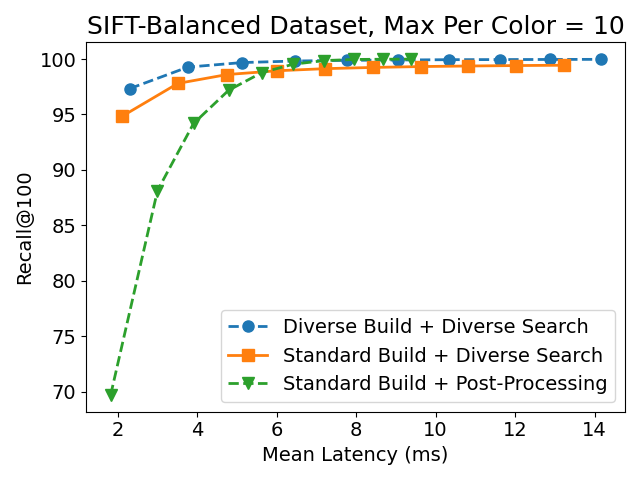}
    
\caption{Recall vs Latency for $k'=10$.}
\label{fig:recall-latency-10}
\end{minipage}
\end{figure*}

\paragraph{Datasets.} We consider two real-world and three semi-synthetic datasets for evaluation.
\begin{itemize}[leftmargin=*]
    \item {\em Real-world Seller Dataset:} Our first real-world seller dataset comprises of $64$-dimensional vector embeddings of different products from a large advertisement corpus. Each product/vector is additionally associated with a \emph{seller}, which becomes its color in our setting. There are $20$ million base vectors, around $2500$ sellers, and $5000$ query vectors. This distribution is highly skewed, with an extremely small number (around $7$) of sellers constituting more than $90 \%$ of the data, hence naturally motivating the need for enforcing diversity in the search results. The fraction of products corresponding to the top 20 sellers is shown in \Cref{fig:percentage}.

\item {\em Amazon Automotive Dataset:} Our second real-world dataset is derived from the recently released Amazon dataset~\cite{big-bench}. 
It comprises of $384$-dimensional vector embeddings of product descriptions listed on Amazon under the Automotive category. Each product/vector is additionally associated with a \emph{brand}, which becomes the color. There are around $2$ million base vectors and around $85000$ brands. The distribution, while skewed, is far more balanced than the above seller dataset, with around $10\%$ of the brands accounting for $80\%$ of the vectors as summarized in~\Cref{fig:percentage2}. 

\item{\em ``Skewed'' Semi-synthetic Datasets:} We also consider the publicly available real-world Arxiv dataset~\cite{arxiv} which contains OpenAI embeddings of around 2 million paper abstracts into 1536 dimensional vectors and the classical SIFT dataset of 1M vectors in $128$ dimensions. These datasets do not contain any color information, so we synthetically add this information into the data set.  Specifically, for the Arxiv dataset, we generate the color information as follows: for each vector, with probability $0.9$, we assign a color selected from the set $\{ 1,2,3\}$ uniformly at random, and with $0.1$ probability we  assign a color selected uniformly at random from the set $\{4,\dots, 1000 \}$. Therefore the number of distinct colors is at most $1000$ in this data set. For the SIFT dataset, we sampled one dominant colors with probability $0.8$ and had a uniform distribution over $999$ other colors with probability $0.2$. 

\item{\em ``Balanced'' Semi-synthetic Dataset:} Finally, we also consider another distribution which is globally uniform, but locally skewed. Indeed, we use the same SIFT dataset for the vector data. For colors, we randomly partition the space into one thousand buckets, using a random hyperplane scheme. We then assign a unique \emph{primary color} for each bucket. Now, each vector within any specific bucket is assigned its primary color with a high probability of $0.8$, and a uniformly random non-primary color with the remaining probability. It is then easy to see that the distribution is roughly balanced across colors from a global perspective, but quite skewed in any small local neighborhood.
\end{itemize}

\paragraph{Tasks.}
For all of the above datasets, we seek to retrieve $k=100$ nearest neighbors. In one extreme scenario, we set $k'=1$, i.e., all hundred of the returned results need to be of distinct colors. This type of setting would be relevant in a retrieval augmented generation setting where documents are typically chunked into several parts and each part is vectorized; when the user utters a query, we might want to retrieve the most relevant set of documents (as opposed to the most relevant chunks, which might all be from the same document) to a given query, before passing on these contents to a large-language model which then answers the user query. A natural way to enforce this document-level diversity during retrieval is to treat the document-id of any vector as its color, and using our diverse search routine.

In another scenario which might have more appeal in shopping or advertisement display, we seek to retrieve $k=100$ nearest neighbors, while having $k'=10$, i.e., no more than ten of the $k$ results may belong to any single color. This will promote the retrieval to display a diverse set of sellers/ brands in such scenarios, thereby hopefully increasing user satisfaction and engagement. This can also capture \emph{intent diversity} in regular web-search (when we can use a simple classifier to represent the intent behind the data point as additional meta-data, which becomes the color in our setting -- e.g., car or animal for different images of jaguar, ML or EE concept for transformer, etc.).

\paragraph{Algorithms.} Since our algorithms are enhancements of the DiskANN algorithm, we use that as a natural baseline to compare against.

\begin{itemize}[leftmargin=*]
\item {\em Standard DiskANN Build + Post-Processing (Baseline):} In this baseline, we build a regular DiskANN graph without any diversity constraints. To answer a query, we first invoke the regular DiskANN search algorithm to retrieve $r \gg k$ candidates, again without any diversity constraints. Then we iterate over the retrieved elements in sorted order of distances to the query, and greedily include the ones which do not violate the $k'$ diversity constraint, until we have  $k$ total elements. The parameter $r$ is tunable at search time, and higher $r$ yields more candidates, meaning more diverse candidates can be obtained using the post-processing step. However, higher $r$ also consumes more search complexity. 

\item{\em Standard DiskANN Build + Diverse Search:} In this improvement, we use our diversity-preserving search~\Cref{alg:color_search} discussed in the \Cref{sec:impl}, but the index construction remains the standard DiskANN algorithm.

\item {\em Diverse DiskANN Build + Diverse Search:} For our complete algorithm, we additionally use our diversity-aware index construction ~\Cref{alg:colorindex} (\Cref{sec:impl}) which ensures sufficient edges are present to nodes of different colors in any neighborhood. 
\end{itemize}

\paragraph{Parameter setup.} For all of the above algorithms, we use fairly standard parameters of list-size $L=200$ and graph-degree $64$ when building the graphs. During search, we vary the list-size $L$ at search time to get varying quality search results and plot the recall@100\footnote{Recall@100 is the  size of the intersection of the algorithm's 100 returned results with the true 100 closest diverse candidates, averaged over all queries. The ground-truth set of top 100 diverse NNs for any query can be computed by iterating over all the vectors in sorted order of distances to the query, and greedily including the ones which do not violate the $k'$ diversity constraint, until we have accumulated $k$ total elements.} vs average query latency.

\subsection{Results}\label{sec:results-exp}
Our results are shown in plots of Figures \ref{fig:recall-latency-1}, \ref{fig:recall-latency-10}. As one can see, both of our algorithmic innovations play a crucial role in the overall search quality on the real-world dataset. For example, to achieve $95\%$ recall@100 in the real-world seller dataset, the baseline approach has latencies upwards of $8$ms, while the improved search algorithm brings it down to $\approx 4.5$ms. Making both build and search diverse further brings it down to around $\approx 1.5$ms, resulting in an improvement of 5X. 

The plot in~\Cref{fig:recall-latency-10} reveals an interesting phenomenon: for high recalls (say $90\%$) on the semi-synthetic arXiv dataset, the post-processing approach has a latency of around $90$ms, while the diverse search algorithm when run on the standard graph has a latency of around $135$ms. This is perhaps because the standard graph construction might not have sufficiently many edges between nodes of different colors to ensure that the diverse search algorithm converges to a good local optimum. On the other hand, running the diverse search on the graph constructed keeping diversity in mind during index construction fares the best, with a latency of only around $25$ms. A similar phenomenon occurs in the SIFT semi-synthetic dataset as well.

\subsection{Build Diversity Parameter Ablation}\label{sec:ablation}
In our heuristic graph construction algorithm (see~\Cref{alg:colorindex,alg:colorprune}), the graph edges are added by considering \emph{both the geometry of the vectors and the corresponding colors}. Loosely, the $\alpha$-pruning rule of DiskANN dictates that an edge $(u,v)$ is blocked by an existing edge $(u,w)$ if $d(w,v) \leq d(u,v)/\alpha$. In the original DiskANN algorithm, any edge $(u,v)$ which is blocked is not added. In our setting, we additionally enforce that \emph{an edge needs to be blocked by edges of $m$ different colors} to not be added to the graph, where $m$ is a tuneable parameter. We now perform an ablation capturing the role of $m$ in the graph quality using the skewed SIFT dataset.~\Cref{tbl:build} shows a table with build times for various indices by varying only the $m$ parameter, and~\Cref{fig:recall-latency-ablation} shows the search quality of these different indices.

\begin{table} 
\centering
\begin{tabular}{|c|c|}
\hline
$m$ Parameter  & Build Time (s) \\
\hline
1 &  50 \\
2 &  53 \\
10 &  55 \\
\hline
\end{tabular}
\caption{Build Times w.r.t $m$ Parameter}
\label{tbl:build}
\end{table}

\begin{figure}[!ht]
    \centering
    \begin{minipage}{0.33\textwidth}
        \centering
\includegraphics[width=\textwidth]{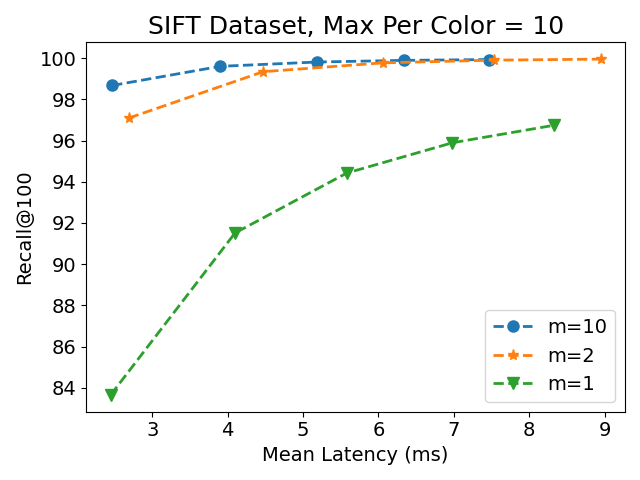}
    \end{minipage}\hspace{1cm}
    \begin{minipage}{0.33\textwidth}
        \centering
\includegraphics[width=\textwidth]{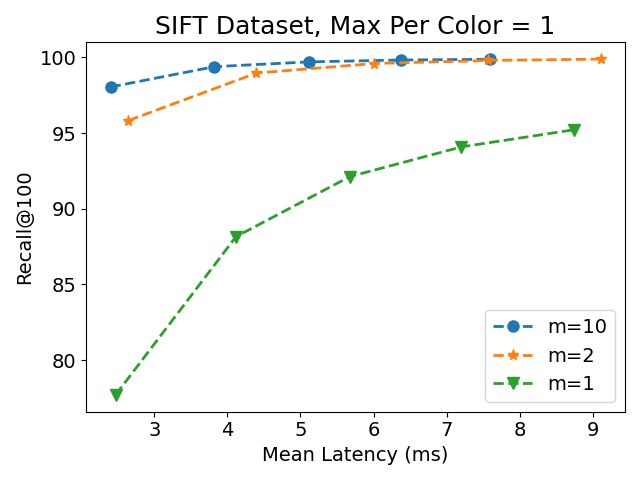}
    \end{minipage}
    \caption{Recall vs Latency for SIFT-Skewed dataset with $k'=10$ (left) and $k'=1$ (right) by varying the diversity parameter $m$ during index construction. Higher $m$ implies more diversity.}
    \label{fig:recall-latency-ablation}
\end{figure}

\onecolumn

\bibliography{main}
\bibliographystyle{alpha}


\appendix

\section{Algorithms for the General Case} \label{sec:general-algo}
In this section we describe our algorithms for the most general case where $k'$ can take any value between $1$ and $k$, and the diversity metric $\rho$ is an arbitrary diversity metric. First we start with definitions, additional preliminaries, and our main theorem statements \ref{thm:diverse_ann} and \ref{thm:dual_diverse_ann}.

\subsection{Additional Preliminaries, Problem Formulation, and the Main Theorems}

As before, we let $(X,D)$ be the underlying metric space, where $D$ measures the distance between the points. In this section, we assume that we are given a second metric space $(X,\rho)$ which measures the {\em diversity} between the points. As before $P$ is a subset of $n$ points in $X$. So for two points $p_1,p_2\in P$, $\rho(p_1,p_2)$ measures their pairwise diversity.

Again, we use $B_D(p,r)$ (or just $B(p,r)$ for simplicity of exposition) to denote a ball centered at $p$ with radius $r$, i.e.,  $B_D(p,r)=\{u\in X: D(u,p)<r\}$. Similarly, we define the ball $B_\rho(p,r)=\{u\in X: \rho(u,p)<r\}$.

The following definitions recap the discussion in the introduction.

\begin{definition}[$C$-diverse]\label{def:minimal_distance}
Let $S$ be a set of points in $X$. We say $S$ is {\em $C$-diverse} if for any two points $p_1,p_2\in S$, we have $\rho(p_1,p_2)\geq C$. 
\end{definition}

Note that the colorful setting, corresponds to the diversity metric $\rho$ being uniform. That is, we can set $\rho(p_i,p_j)=0$ when $col[p_i]=col[p_j]$, and set $\rho(p_i,p_j)=1$ otherwise. Then, we retrieve the colorful notion of diversity: a set $S$ of size $k$ is colorful iff it is $1$-diverse.
We further generalize this notion to allow the set to contain at most $k'>1$ points that are similar to each other. 

\begin{definition}[$(k',C)$-diverse]\label{def:k'-minimal_distance}
Let $S$ be a set of points in $X$. We say $S$ is $(k',C)$-diverse if for any point $p\in S$, we have $|B_\rho(p,C)\cap S|\le k'$. 
Note that being $(1,C)$-diverse is equivalent to the notion of $C$-diverse.
\end{definition}

We consider two dual variants of the diverse nearest neighbor search problem, both of which use two approximation factors: $c>1$ is the ``dissimilarity'' approximation factor with respect to $D$, and $a>1$ is the ``diversity'' approximation factor with respect to $\rho$.

\begin{definition}[Primal Diverse NN]
    Given a point set $P$, diversity value $C$, and the value $k'\leq k$, the goal of {\em Primal Diverse NN} is to preprocess $P$ and create a data structure such that given a query point $q$, one can quickly return the {\em closest} set $S\subset P$ of size $k$ that is $(k',C)$-diverse. Here closeness of a set $S$ is measured by $S_k$ (See Definition \ref{def:S_i}).

    In the approximate variant, for any $q \in X$, if $\mathsf{OPT}$ is a $(k',C)$-diverse set of $k$ points which minimizes $\mathsf{OPT_k}$, then the data structure outputs $\mathsf{ALG}$ that is $(k',C/a)$-diverse such that $\mathsf{ALG_k} \le c\cdot \mathsf{OPT_k}$.
\end{definition}

\begin{definition}[Dual Diverse NN]
    Given a point set $P$, a radius $R$, and the value $k'\leq k$, the goal of {\em Dual Diverse NN} is to preprocess $P$ and create a data structure such that given a query point $q$, one can quickly return a set $S\subset P$ of size $k$ that lie within the radius $R$, while maximizing the diversity. 

    Formally, for any $q \in X$, let $B_P(q,R)$ be the set of points in $P$ within distance $R$ from $q$, and let $\mathsf{OPT}$ be a $(k',C)$-diverse set of $k^*=\min(k,|B_P(q,R)|)$ points from $B_P(q,R)$ that maximizes $C$. Then the data structure outputs $\mathsf{ALG}$ of size $k^*$ that is $(k',C/a)$-diverse such that $\mathsf{ALG_{k^*}}\le c R$.
\end{definition}

As described in the introduction, the problem addressed in the prior work~\cite{abbar2013diverse} is the dual diverse NN problem, where the only consider $k'=1$.

\paragraph{Results.} Our main theoretical result is captured by the following theorems, which specifies the approximation and running time guarantees for our algorithms solving the primal and dual versions of the diverse nearest neighbor problem. 

\begin{theorem}[Primal Diverse ANN] \label{thm:diverse_ann}
Let $\mathsf{OPT}=\{p^*_1,...,p^*_k\}$ be a $(k',C)$-diverse solution that minimizes $\mathsf{\mathsf{OPT_k}}$. Given the graph constructed by Algorithm~\ref{alg:indexing}, the search Algorithm~\ref{alg:general_search} finds a $(k',C/12)$-diverse solution $\mathsf{ALG}$ with $\mathsf{ALG_k}\le \left(\frac{\alpha+1}{\alpha-1}+\epsilon\right)\mathsf{OPT_k}$ in $O\left(k\log_{\alpha}\frac{\Delta}{\epsilon}\right)$ steps, where each step takes $O\left((k^3/k')(8\alpha)^d\log \Delta\right)$ time.
The data structure uses space $O(n(k/k')(8\alpha)^d\log\Delta)$.
\end{theorem}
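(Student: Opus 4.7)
The plan is to mirror the four-part scaffold of the proof of Theorem~\ref{thm:colorful_ann}, replacing the ``colorful'' primitive with the stronger ``$(k',C)$-diverse'' primitive under the generic metric $\rho$. I will (i) bound the out-degree of the graph produced by Algorithm~\ref{alg:indexing}; (ii) prove a structural lemma showing the existence of a valid swap target $p^*\in\mathsf{OPT}\setminus\mathsf{ALG}$; (iii) prove an update lemma producing a suitable out-neighbor $p'$ of $p_k$ via the representative sets $\mathsf{rep}[u]$; and (iv) re-run the same three-case convergence argument to turn the per-step contraction into the claimed approximation factor and step count.

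For the degree bound, I would generalize Lemma~\ref{lm:degree_color}. Partitioning into $O(\log\Delta)$ geometric rings around $p$ and applying \Cref{lm:doubling_dimension} leaves $O((8\alpha)^d)$ surviving centers per ring. The new ingredient is that each $|\mathsf{rep}[u]|$ need only be $k/k'$ (rather than $k$): any $(k',C)$-diverse set of $k$ points contains an anti-chain of at least $k/k'$ pairwise $\rho$-far points, obtained by iteratively removing a ball that contains at most $k'$ of them, so a Gonzales-style greedy choice of $k/k'$ representatives suffices to hit every diversity class. This delivers out-degree $O((k/k')(8\alpha)^d\log\Delta)$ and the claimed space; the per-step time picks up an extra $k^2$ factor from verifying the $(k',C/12)$-diversity condition for each of the $|U|$ candidate swaps, giving the claimed $O((k^3/k')(8\alpha)^d\log\Delta)$.

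For the structural lemma (analogue of Proposition~\ref{prop:p_star_exists_diverse}), I would argue by counting: when $\mathsf{ALG}$ is $(k',C/12)$-diverse and $p_k\notin\mathsf{OPT}$, the at most $k-1$ points of $\mathsf{ALG}\setminus\{p_k\}$ cluster into ``saturated'' $\rho$-neighborhoods that together block at most $k-1$ of the $k$ $(k',C)$-spread points of $\mathsf{OPT}$, leaving some $p^*\in\mathsf{OPT}\setminus\mathsf{ALG}$ whose insertion preserves $(k',C/12)$-diversity. For the update lemma (analogue of Lemma~\ref{lm:update_color}), either $p^*$ is already an out-neighbor of $p_k$, or it was pruned while processing some $u$ that is; I would then set $p'\in\mathsf{rep}[u]$ to be the Gonzales representative closest to $p^*$ under $\rho$. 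Triangle inequality in $D$, combined with the $D(u,v)\le D(p_k,u)/(2\alpha)$ pruning rule, delivers $D(p',q)\le D(p_k,q)/\alpha + \mathsf{OPT_k}(1+1/\alpha)$ verbatim, and the three-case convergence analysis from the proof of Theorem~\ref{thm:colorful_ann} then transfers unchanged to give the $(\frac{\alpha+1}{\alpha-1}+\epsilon)$ approximation in $O(k\log_\alpha(\Delta/\epsilon))$ steps.

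The main obstacle is maintaining the $(k',C/12)$-diversity invariant across all $T$ iterations without compounding losses. Each swap replaces $p_k$ by a representative $p'$ that is only approximately aligned with the exact witness $p^*$, so naively the minimum pairwise $\rho$-distance inside $\mathsf{ALG}$ could shrink by a constant factor per step and decay to zero. I would handle this by anchoring the admissible $\rho$-separation between selected representatives to a fixed quantity depending only on the Gonzales radius and the $1/(2\alpha)$ pruning slack from preprocessing, so that the degradation is absorbed once (into the $1/12$ factor) rather than accumulated across steps. Chasing constants through the triangle inequality $\rho(p',p'')\ge \rho(p^*_i,p^*_j)-\rho(p^*_i,p')-\rho(p^*_j,p'')$ over all pairs of swapped points yields the specific $1/12$; the exact constant is not load-bearing, but its iteration-independence is exactly what lets the $O(k\log_\alpha(\Delta/\epsilon))$ step bound of Theorem~\ref{thm:colorful_ann} survive.
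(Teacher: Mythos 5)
Your scaffold matches the paper's, but the two lemmas at its core are under-specified in ways that break the argument. First, your structural lemma is weaker than what is needed. The paper's Lemma~\ref{lm:p_star_exists_diverse} does not merely produce a $p^*\in\mathsf{OPT}\setminus\mathsf{ALG}$ whose insertion preserves $(k',C/12)$-diversity; it produces one satisfying the stronger ball condition $|B_\rho(p^*,C/2)\cap(\mathsf{ALG}\setminus p_k)|<k'$, proved by partitioning $\mathsf{OPT}\setminus\mathsf{ALG}$ into disjoint $\rho$-balls and charging them against $\mathsf{ALG}\setminus p_k\setminus\mathsf{OPT}$. That radius $C/2$ is load-bearing: when $p^*$ is not an out-neighbor and must be replaced by a representative $p'\in\mathsf{rep}[u]$ with $\rho(p^*,p')<C/3$, one gets $B_\rho(p',C/6)\subseteq B_\rho(p^*,C/2)$, hence fewer than $k'$ points of $\mathsf{ALG}\setminus p_k$ in the $C/6$-ball of $p'$, which is exactly what certifies that the swap stays $(k',C/12)$-diverse. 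Second, your rule ``take the Gonzales representative closest to $p^*$ under $\rho$'' does not handle the case where every representative is $\rho$-far from $p^*$: there the paper invokes the anti-cover property of the greedy (Proposition~\ref{prop:greedy}) to get $k/k'$ representatives pairwise at least $C/3$ apart, and then a pigeonhole over the $k-1$ points of $\mathsf{ALG}\setminus p_k$ (each can be $C/6$-close to at most one representative, while saturating all of them would require $(k/k')\cdot k'=k$ points) to find a representative whose $C/6$-ball contains fewer than $k'$ points of $\mathsf{ALG}\setminus p_k$. Your proposed alternative---controlling pairwise $\rho$-distances by triangle inequalities against the $\mathsf{OPT}$ points---is not the right invariant for the ball-counting definition of $(k',C)$-diversity when $k'>1$, and is also unnecessary: Algorithm~\ref{alg:general_search} enforces the $C/12$ check explicitly at each insertion, so nothing compounds; the entire difficulty is the \emph{existence} of a valid candidate with the distance decrease, which is what the two ingredients above supply.

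You also port the colorful update lemma too literally by anchoring at $p_k$ and searching only $N_{out}(p_k)$. The paper's Lemma~\ref{lm:update_diverse} anchors at $w$, the closest point of $\mathsf{ALG}$ to $p^*$: the sets $\mathsf{bag}[u]$ and $\mathsf{rep}[u]$ in Algorithm~\ref{alg:indexing} are defined relative to the source $w$, so the edge to $p'$ emanates from $w$, and---crucially---every point of $\mathsf{bag}[u]$ is strictly closer to $p^*$ than $w$ is, which guarantees $p'\notin\mathsf{ALG}$; with $p_k$ as anchor the candidate you find could already lie in $\mathsf{ALG}\setminus p_k$ and the step would make no progress. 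This is precisely why the general search sets $U=\bigcup_{p\in\mathsf{ALG}}(N_{out}(p)\cup p)$ rather than $N_{out}(p_k)$, and it is where the per-step bound really comes from: $|U|=O((k^2/k')(8\alpha)^d\log\Delta)$ times an $O(k)$ diversity check per candidate, not a degree-$(k/k')$ neighborhood of $p_k$ times a $k^2$ verification cost as you write (the arithmetic coincides, the decomposition does not). The distance chain still closes because $D(w,q)\le D(p_k,q)$. The remaining pieces---the ring-plus-doubling degree and space bound, and the three-case convergence analysis---do transfer essentially as you describe.
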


\smallskip
\begin{theorem}[Dual Diverse ANN] \label{thm:dual_diverse_ann}
Given the graph constructed by Algorithm~\ref{alg:indexing}, the search Algorithm~\ref{alg:dual_search} finds a $(k',C/24)$-diverse NN solution $\mathsf{ALG}$ satisfying $ALG_k\le \left(\frac{\alpha+1}{\alpha-1}+\epsilon\right)\cdot R$ in $\tilde{O}\left((k^4/k')(8\alpha)^d\log^2\frac{\Delta}{\epsilon}\right)$ time, if there exists a $(k',C)$-diverse solution $\mathsf{OPT}$ with $\mathsf{OPT_k}\le R$.
\end{theorem}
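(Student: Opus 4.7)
The plan is to reduce the dual problem to $O(\log \Delta)$ invocations of the primal Algorithm~\ref{alg:general_search}, by searching over the diversity parameter. The key observation driving this reduction is monotonicity: if some $(k', C)$-diverse set of size $k$ lies within distance $R$ of $q$, then the same set is $(k', C')$-diverse for every $C' \le C$, so Theorem~\ref{thm:diverse_ann} applied with target diversity $C'$ will yield $\mathsf{ALG}_k \le (\tfrac{\alpha+1}{\alpha-1}+\epsilon)\mathsf{OPT}_k \le (\tfrac{\alpha+1}{\alpha-1}+\epsilon)R$ whenever $C' \le C$. Thus the goal becomes finding a guess $C'$ that is simultaneously as close to the true optimum $C$ as possible and still within the distance budget.

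First I would enumerate a multiplicatively-spaced family of guesses $C_i := 2^i \cdot D_{\min}$ for $i = 0, 1, \ldots, \lceil \log_2 \Delta \rceil$, giving $O(\log \Delta)$ candidates. For each $C_i$, I would invoke Algorithm~\ref{alg:general_search} with target diversity $C_i$ on the (diversity-independent) graph produced by Algorithm~\ref{alg:indexing}, and then test whether the returned set $\mathsf{ALG}$ meets the distance threshold $\mathsf{ALG}_k \le (\tfrac{\alpha+1}{\alpha-1}+\epsilon) R$. The final output is the $\mathsf{ALG}$ associated with the largest $C_i$ that passes this test; binary search is also possible by monotonicity of feasibility, but naive enumeration already fits the time budget.

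For correctness, let $C$ denote the true optimal diversity achievable with $k$ points within radius $R$. Our discretization contains some $i^*$ with $C_{i^*} \in [C/2, C]$. By Theorem~\ref{thm:diverse_ann} and the monotonicity observation, the primal call at $C_{i^*}$ produces a $(k', C_{i^*}/12)$-diverse set meeting the distance threshold, so our chosen output corresponds to some $C' \ge C_{i^*} \ge C/2$. Since the output is guaranteed $(k', C'/12)$-diverse and $C'/12 \ge C/24$, it is also $(k', C/24)$-diverse, matching the claim. For running time, Theorem~\ref{thm:diverse_ann} gives a per-invocation cost of $\tilde{O}\!\left((k^4/k')(8\alpha)^d \log \tfrac{\Delta}{\epsilon}\right)$ (product of $O(k\log_\alpha \tfrac{\Delta}{\epsilon})$ steps and $O((k^3/k')(8\alpha)^d \log \Delta)$ per step); multiplying by $O(\log \Delta)$ guesses yields the claimed $\tilde{O}\!\left((k^4/k')(8\alpha)^d \log^2 \tfrac{\Delta}{\epsilon}\right)$ bound.

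The main obstacle I expect is largely bookkeeping rather than algorithmic: I need to verify that Algorithm~\ref{alg:indexing} truly constructs a graph independent of $C$, so that the same preprocessed structure can be reused across all $O(\log\Delta)$ primal calls without incurring re-indexing cost. The statement of Theorem~\ref{thm:diverse_ann} strongly supports this, as $C$ only enters through the search procedure. Once this is confirmed, no new technical machinery beyond the primal guarantee is needed, and the slightly worsened diversity factor of $24$ (versus $12$ in the primal) arises solely from the factor-of-$2$ slack in the multiplicative grid.
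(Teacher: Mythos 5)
Your high-level strategy---wrap the primal guarantee of Theorem~\ref{thm:diverse_ann} in a geometric search over the diversity parameter, accept a factor-$2$ loss from the guess, and thereby obtain a $(k',C/24)$-diversity guarantee with $\mathsf{ALG}_k\le(\frac{\alpha+1}{\alpha-1}+\epsilon)R$---is the same skeleton as the paper's argument, and your monotonicity observation and the $C'/12\ge C/24$ accounting are fine. The genuine gap is in the discretization: you take guesses $C_i=2^iD_{\min}$ for $i\le\lceil\log_2\Delta\rceil$, anchoring the grid to the scale and aspect ratio of the distance metric $D$. But the diversity parameter $C$ lives in the separate (pseudo-)metric $\rho$, which need not bear any relation to $D$: in the motivating colorful case $\rho$ takes values in $\{0,1\}$, so if $D_{\min}>1$ your grid contains no value in $[C/2,C]$, every guess is infeasible, and the reduction returns nothing even though a valid solution exists. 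Hence the claim ``our discretization contains some $i^*$ with $C_{i^*}\in[C/2,C]$'' is unjustified for general $\rho$, and the $O(\log\Delta)$ bound on the number of guesses (and thus the stated running time) does not follow.

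The paper avoids this by staying in the $\rho$ scale: Algorithm~\ref{alg:dual_search} first binary-searches over the initialization routine of Lemma~\ref{lm:initialization} to find, up to a factor $4$, the largest diversity achievable on the data set, sets $\overline{C}$ to that data-dependent upper bound, and then repeatedly halves $\overline{C}$, re-running the primal-style search for $O(k\log_\alpha\frac{\Delta}{\epsilon})$ steps at each level and stopping as soon as the distance test $\mathsf{ALG}_k\le(\frac{\alpha+1}{\alpha-1}+\epsilon)R$ passes; at exit $\overline{C}\ge C/2$ and the maintained set is $(k',\overline{C}/12)$-diverse, which yields the $C/24$ bound. Note also that the theorem is a statement about this specific Algorithm~\ref{alg:dual_search} (a top-down halving scheme with a warm-started $\mathsf{ALG}$ carried across levels), so even with a corrected grid your proposal would be certifying a different algorithm rather than the one in the statement. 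The natural repair within your framework is to derive the guess range from $\rho$ itself---e.g., via the initialization-based binary search the paper uses, or a grid over realizable $\rho$-distances---rather than from $D_{\min}$ and $\Delta$.
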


\subsection{Algorithm}
\noindent\textbf{The preprocessing algorithm.} The indexing algorithm, which is the same for both the primal and dual versions of the problem, is shown in Algorithm~\ref{alg:indexing}. Line 12 of the algorithm uses the greedy algorithm of \cite{gonzalez1985clustering}, defined below.

\begin{algorithm}
\caption{Indexing algorithm for diverse NN}
\label{alg:indexing}
\begin{algorithmic}[1]
\STATE \textbf{Input}: A set of $n$ points $P=\{p_1,...,p_n\}$; $k$ is the size of the output; $k'$ is the parameter in the diversity definition; $\alpha$ is the parameter used for pruning.
\STATE \textbf{Output}: A directed graph $G=(V,E)$ where $V=\{1,...,n\}$ are associated with the point set $P$.
\FOR{each point $p\in P$}
    \STATE Sort all points $u\in P$ based on their distance from $p$ and put them in a list $\mathcal{L}$ in that order
    \WHILE{$\mathcal{L}$ is not empty}
        \STATE $u\gets \argmin\limits_{u\in \mathcal{L}}D(u,p)$
        \STATE Initialize $\mathsf{bag}[u]\gets \{u\}$
        \FOR{each point $v\in \mathcal{L}$ in order}
            \IF{$D(u,v)\le D(p,u)/(2\alpha)$}
                \STATE $\mathsf{bag}[u]\gets \mathsf{bag}[u]\cup v$
                \STATE remove $v$ from $\mathcal{L}$
            \ENDIF
        \ENDFOR
        \STATE $\mathsf{rep}[u]\gets$ use the greedy algorithm of Gonzales to choose $k/k'$ points in $\mathsf{bag}[u]$ to approximately maximize the minimum pairwise diversity.
        \STATE add edges from $p$ to $\mathsf{rep}[u]$
        \STATE Remove $u$ from $\mathcal{L}$
    \ENDWHILE
\ENDFOR
\end{algorithmic}
\end{algorithm}

\noindent\textbf{Gonzales' greedy algorithm.} Given a set of $n$ points and a parameter $m$, the algorithm picks $m$ points as follows. The first point is chosen arbitrarily. Then, in each of the next $m-1$ steps, the algorithm picks the point whose minimum distance w.r.t. $\rho$ to the currently chosen points is maximized. It is known \cite{gonzalez1985clustering} that this algorithm provides a $2$-approximation for the problem of picking a subset of size $m$ which maximizes the minimum pairwise diversity distance between the picked points. Moreover, the picked set has an anti-cover property which we will discuss in Proposition \ref{prop:greedy}.

\noindent\textbf{Primal Search Algorithm.} Algorithm~\ref{alg:general_search} shows the search algorithm for the primal version of diverse nearest neighbor. 
The algorithm is analyzed in Section \ref{sec:primal-analysis}. The initialization step of line 3, can be done using the following algorithm.


\begin{algorithm}
\caption{Search algorithm for primal diverse NN}
\label{alg:general_search}

\begin{algorithmic}[1]
\STATE \textbf{Input}: A graph $G=(V,E)$ with $N_{out}(p)$ denoting the out edges of $p$; query $q$, number of optimization steps $T$; diversity lower bound $C$.
\STATE \textbf{Output}: A set of $k$ points $\mathsf{ALG}$.
\STATE Initialize $\mathsf{ALG}=\{p_1,...,p_k\}$ to be a set of $k$ points that are $(k',C/12)$-diverse using the initialization step proved in Lemma \ref{lm:initialization}.
\FOR{$i=1$ to $T$}
    \STATE $U\gets \bigcup\limits_{p\in \mathsf{ALG}}(N_{out}(p)\cup p)$, and sort $U$ based on their distance from $q$
        \STATE $\mathsf{ALG}\gets$ the closest $k-1$ points in $\mathsf{ALG}$
    
    \FOR{each point $u \in U$ in order}
        \IF{$\mathsf{ALG}\bigcup u$ is $(k',C/12)$-diverse}
            \STATE $\mathsf{ALG}\gets \mathsf{ALG}\cup u$
        \ENDIF
        \IF{$|\mathsf{ALG}|=k$}
            \STATE Break
        \ENDIF
    \ENDFOR
\ENDFOR
\STATE \textbf{Return} $\mathsf{ALG}$
\end{algorithmic}
\end{algorithm}

\noindent\textbf{The initialization step.} Given a set $P$ of $n$ points equipped with metric distance $\rho$, and parameters $k'$ and $k$, and lower bound diversity $C$, the goal is to pick a subset $S\subseteq P$ of size $k$ which is $(k',C/4)$ diverse or otherwise output that no $(k',C)$-diverse subset $S$ exists. We use the following algorithm
\begin{itemize}
    \item Initialize $\mathsf{SOL}=\emptyset$
    \item While there exists a point $p\in P$ such that the ball $B=B_\rho(p,C/4)$ has $k'$ points in it, (i.e., $|B\cap P|>k'$), 
    \begin{itemize}
        \item Add an arbitrary subset of $B\cap P$ of size $k'$ to $\mathsf{SOL}$. 
        \item Remove all points in $2B=B_\rho(p,C/2)$ from $P$.
    \end{itemize}
    \item Add all remaining points in $P$ to $\mathsf{SOL}$.
    \item If $|\mathsf{SOL}|\geq k$, return an arbitrary subset of it of size $k$, otherwise return `no solution'.
\end{itemize}

\begin{lemma}[Initialization]\label{lm:initialization}
If $P$ has a subset $\mathsf{OPT}$ of size $k$ that is $(k',C)$-diverse, our initialization algorithm finds a $(k',C/4)$-diverse subset of size $k$.
\end{lemma}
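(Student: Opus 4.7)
My approach is to prove the two implicit sub-claims separately: (i) $|\mathsf{SOL}|\ge k$, so that the algorithm returns a size-$k$ set rather than ``no solution''; and (ii) the returned set is $(k',C/4)$-diverse.

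For the size bound I use an OPT-charging argument. The key lemma is that, in every iteration with chosen center $c_i$, the removed ball $B_\rho(c_i,C/2)$ contains at most $k'$ points of $\mathsf{OPT}$. Indeed, if $B_\rho(c_i,C/2)\cap\mathsf{OPT}$ had more than $k'$ elements, picking any $p^*$ in it and applying the triangle inequality gives $B_\rho(c_i,C/2)\subseteq B_\rho(p^*,C)$, so $B_\rho(p^*,C)\cap\mathsf{OPT}$ would contain more than $k'$ elements, contradicting the $(k',C)$-diversity of $\mathsf{OPT}$. Consequently, if the main loop executes $t$ times, we add $tk'$ points to $\mathsf{SOL}$ while losing at most $tk'$ OPT points from $P$; the final step then appends all remaining points of $P$, which still contains at least $k-tk'$ OPT points. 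Summing gives $|\mathsf{SOL}|\ge k$.

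For the diversity bound I classify each element of $\mathsf{SOL}$ as a \emph{cluster point} (one of the $k'$ points added in some iteration with center $c_i$) or a \emph{singleton} (added in the final step). Fix any $p\in\mathsf{SOL}$; I bound $|B_\rho(p,C/4)\cap\mathsf{SOL}|$ by cases. If $p$ lies in cluster $i$, its own cluster contributes at most $k'$ points, all within $C/4$ of $c_i$. For any other $q\in\mathsf{SOL}$ (either in a cluster $j\ne i$ or a singleton), $q$ must have survived iteration $i$, so $\rho(q,c_i)\ge C/2$; combined with $\rho(p,c_i)<C/4$, the triangle inequality yields $\rho(p,q)>C/4$ and hence $q\notin B_\rho(p,C/4)$. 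If $p$ is a singleton, the same argument (with $p$ itself playing the role of a survivor) excludes every cluster point, and the loop's termination condition bounds the singleton--singleton contribution via $|B_\rho(p,C/4)\cap P_{\text{final}}|\le k'$. In every case $|B_\rho(p,C/4)\cap\mathsf{SOL}|\le k'$.

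\textbf{Main obstacle.} The crux is identifying the right invariant for the diversity analysis. It is not quite ``cluster centers are pairwise $\ge C/2$ apart''; rather, it is that \emph{every point still alive at any later step} is $\ge C/2$ from each earlier $c_i$, since otherwise it would have been removed. This invariant, combined with the tight $<C/4$ separation of cluster members from their own center, is exactly what produces the strict $>C/4$ gap across clusters. Once this is in place, the size accounting, the within-cluster bound, and the singleton--singleton bound via the termination condition are all routine.
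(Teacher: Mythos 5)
Your proof is correct and follows the same approach as the paper: the OPT-charging argument for $|\mathsf{SOL}|\ge k$ (via the triangle-inequality bound that $|\mathsf{OPT}\cap B_\rho(c_i,C/2)|\le k'$) matches the paper's one-to-one mapping from $\mathsf{OPT}$ into $\mathsf{SOL}$, and the diversity argument rests on the same observation that picked points are $<C/4$ from their own center while all other surviving or previously removed points are $\ge C/2$ from it. Your write-up is in fact a bit more careful than the paper's terse ``it is straightforward to see'' justification, since you explicitly handle cross-cluster pairs (a cluster-$j$ member with $j<i$ can lie inside $B_\rho(c_i,C/2)$, contradicting a literal reading of the paper's ``no additional point will ever be picked in $2B$''; your survival invariant $\rho(p,c_j)\ge C/2$ for any later-alive $p$ is the correct fix and gives the needed strict $>C/4$ separation).
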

\begin{proof}
    Note that it is straightforward to see why the set $\mathsf{SOL}$ that we get at the end is $(k',C/4)$-diverse. This is because first of all, each time we pick $k'$ points in a ball $B$ and add them to $\mathsf{SOL}$, we make sure that no additional point will ever be picked in $2B$ and thus within distance $C/4$ of the points we pick there will be at most $k'$ points in the end. Second, at the end, every remaining ball of radius $C/4$ has less than or equal to $k'$ points in it. Therefore, we can pick all such points in the solution and everything we picked will be $(k',C/4)$ diverse.

    Next we argue that we are in fact able to pick at least $k$ points in total which completes the argument. We do it by following the procedure of our algorithm and comparing it with $\mathsf{OPT}$. 
    At each iteration of the while loop that we remove $P\cap 2B$, we add exactly $k'$ points from $P\cap 2B$ to our solution $\mathsf{SOL}$. 
    Now note that the optimal solution $\mathsf{OPT}$ cannot have more than $k'$ points in $2B$ because by triangle inequality any pair of points in $2B$ have distance at most $C$, and picking more than $k'$ points in this ball contradicts the fact that $\mathsf{OPT}$ is $(k',C)$ diverse. Thus we can have an one-to-one mapping from each point in $\mathsf{OPT}\cap 2B$ to the $k'$ points in $P\cap 2B$ added to $\mathsf{SOL}$.
    At the end of the while iteration, we know any unmapped point in $\mathsf{OPT}$ still exists in $P$, so we just map it to itself. By doing this, we can have an one-to-one mapping from $\mathsf{OPT}$ to $\mathsf{SOL}$, which means that $|\mathsf{SOL}|\ge|\mathsf{OPT}|=k$. 
\end{proof}

\noindent\textbf{Dual Search Algorithm.} Algorithm~\ref{alg:dual_search} shows the search algorithm for the dual version of the diverse nearest neighbor problem. We provide the analysis in Section \ref{sec:dual-analysis}.

\subsection{Analysis of the Primal Diverse NN Algorithm}\label{sec:primal-analysis}
In this section, we prove Theorem~\ref{thm:diverse_ann} that gives the approximation and running time guarantees for Algorithm~\ref{alg:indexing} and Algorithm~\ref{alg:general_search}.

\begin{lemma}\label{lm:degree_diverse}
The graph constructed by Algorithm~\ref{alg:indexing} has degree limit $O((k/k')(8\alpha)^d\log\Delta)$.
\end{lemma}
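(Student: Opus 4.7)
The plan is to mimic the argument of Lemma~\ref{lm:degree_color}, where the only substantive change is that each ``bag'' now contributes at most $k/k'$ representative out-edges (via the Gonzales step on Line 12) rather than $k$. So I would split the bound into two independent factors: (i) a bound of $k/k'$ on $|\mathsf{rep}[u]|$ for every processed center $u$, and (ii) a bound of $O((8\alpha)^d \log \Delta)$ on the number of points $u$ that are ever processed as the center of a bag in the while loop at source $p$. Multiplying these gives the claimed degree bound.

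Factor (i) is immediate from the algorithm: the Gonzales greedy procedure is explicitly instructed to select $k/k'$ points from $\mathsf{bag}[u]$, so each out-edge set $\mathsf{rep}[u]$ added on Line 14 has at most $k/k'$ elements. Hence $|N_{\mathrm{out}}(p)| \le (k/k') \cdot (\text{number of centers } u \text{ processed at } p)$.

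For factor (ii), I would reuse the annulus covering argument verbatim. Partition points by their distance to $p$ into $O(\log \Delta)$ geometric rings $\mathsf{Ring}(p,D_{max}/2^i,D_{max}/2^{i-1})$ for $i \in [\log_2 \Delta]$. By Lemma~\ref{lm:doubling_dimension}, each ring intersected with $P$ can be covered by $m \le O((8\alpha)^d)$ balls of radius $D_{max}/(2^{i+2}\alpha)$. The key observation is that within any one such small covering ball $B$ at most one point can ever become a center: if $u \in B$ is selected first (i.e., is closest to $p$ in that ring at its moment of processing), then any later $v \in B \cap \mathcal{L}$ satisfies
\[
D(u,v) \le 2 \cdot \frac{D_{max}}{2^{i+2}\alpha} = \frac{D_{max}}{2^{i+1}\alpha} \le \frac{D(p,u)}{2\alpha},
\]
where the last inequality uses $D(p,u) \ge D_{max}/2^i$. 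Thus $v$ is absorbed into $\mathsf{bag}[u]$ on Line 10 and removed from $\mathcal{L}$ on Line 11, so it never becomes a center itself. Summing $O((8\alpha)^d)$ centers per ring over $O(\log \Delta)$ rings gives factor (ii), and combining with factor (i) yields the claimed $O((k/k')(8\alpha)^d \log \Delta)$ bound.

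There is no real obstacle here beyond carefully tracking where the factor $k/k'$ enters (Gonzales on Line 12) versus where the doubling-dimension covering enters (the pruning condition on Line 9, which is identical to the colorful case). Compared to Lemma~\ref{lm:degree_color}, nothing in the geometric pruning has changed; the only difference is the size of the representative set that is attached to each surviving center, so the savings over the trivial colorful bound come precisely from Gonzales being allowed to output $k/k'$ rather than $k$ points.
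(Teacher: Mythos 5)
Your proposal is correct and follows essentially the same route as the paper's proof: bound the number of surviving centers per source $p$ by the ring-decomposition plus doubling-dimension covering argument (at most one center per small ball of radius $D_{max}/(2^{i+2}\alpha)$, thanks to the Line~9 pruning criterion), and multiply by the $k/k'$ size of each $\mathsf{rep}[u]$ produced by the Gonzales step. The only differences are cosmetic (you spell out the triangle-inequality calculation that the paper leaves implicit), so nothing further is needed.
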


\begin{proof}
Let's first bound the number of points not removed by others, then according to Line 14-15 in Algorithm~\ref{alg:indexing}, the degree bound will be that times $k/k'$.

We use $\mathsf{Ring}(p,r_1,r_2)$ to denote the points whose distance from $p$ is larger than $r_1$ but smaller than $r_2$. For each $i\in [\log_2 \Delta]$, we consider the $\mathsf{Ring}(p,D_{max}/2^i,D_{max}/2^{i-1})$ separately. According to Lemma~\ref{lm:doubling_dimension}, we can cover $\mathsf{Ring}(p,D_{max}/2^i,D_{max}/2^{i-1})\cap P$ using at most $m\le O((8\alpha)^d)$ small balls with radius $\frac{D_{max}}{2^{i+2}\alpha}$. According to the pruning criteria in Line 9, within each small ball, there will be at most one point remaining. This establishes the degree bound of $O((k/k')(8\alpha)^d\log\Delta)$.
\end{proof}

\begin{lemma}\label{lm:p_star_exists_diverse}
Suppose $\mathsf{OPT}=\{p^*_1,...,p^*_k\}$ is a $(k',C)$-diverse solution with minimized $\mathsf{OPT_k}$, and let $\mathsf{ALG}=\{p_1,...,p_k\}$ be the current solution (ordered by distance from $q$). If $p_k\notin \mathsf{OPT}$, there exists a point $p^*\in \mathsf{OPT}\setminus \mathsf{ALG}$ such that $|B_{\rho}(p^*,C/2)\cap (\mathsf{ALG}\setminus p_k)|<k'$ and $\mathsf{ALG}\setminus p_k\bigcup p^*$ is $(k',C/4)$-diverse.
\end{lemma}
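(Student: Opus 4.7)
The plan is a two-step argument. First, I would find a candidate $p^* \in \mathsf{OPT}\setminus \mathsf{ALG}$ satisfying Condition~1 by a double-counting argument on a proximity bipartite graph, and then verify Condition~2 by combining Condition~1 with the diversity invariant maintained on $\mathsf{ALG}$ by the search procedure.

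For Condition~1, I would set up the bipartite graph $H$ on vertex parts $\mathsf{OPT}$ and $\mathsf{ALG}\setminus\{p_k\}$, joining $p^*\in\mathsf{OPT}$ with $p_i\in \mathsf{ALG}\setminus\{p_k\}$ iff $\rho(p^*,p_i)<C/2$. The key degree bound is $\deg_H(p_i)\le k'$: by the triangle inequality, any $k'+1$ OPT-points within $\rho$-distance $C/2$ of a common $p_i$ would be pairwise within $C$, contradicting the $(k',C)$-diversity of $\mathsf{OPT}$. Hence $|E(H)|\le (k-1)k'$. Using the hypothesis $p_k\notin\mathsf{OPT}$, every $p^*\in\mathsf{OPT}\cap\mathsf{ALG}$ lies in $\mathsf{ALG}\setminus\{p_k\}$ and self-contributes at least one edge to its own $\deg_H$; subtracting these self-contributions from the edge budget and averaging over the $|\mathsf{OPT}\setminus\mathsf{ALG}|$ remaining candidates produces some $p^*\in\mathsf{OPT}\setminus \mathsf{ALG}$ with $\deg_H(p^*)<k'$, which is exactly Condition~1.

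For Condition~2, I would verify $(k',C/4)$-diversity pointwise on $\mathsf{ALG}':=(\mathsf{ALG}\setminus\{p_k\})\cup\{p^*\}$. At $p^*$, Condition~1 directly gives $|B_\rho(p^*,C/4)\cap \mathsf{ALG}'|\le 1+(k'-1)=k'$. At any $p_i\in \mathsf{ALG}\setminus\{p_k\}$: if $\rho(p_i,p^*)\ge C/4$ the ball $B_\rho(p_i,C/4)\cap \mathsf{ALG}'$ is unchanged from $B_\rho(p_i,C/4)\cap (\mathsf{ALG}\setminus\{p_k\})$ and inherits the bound from the invariant on $\mathsf{ALG}$; otherwise by symmetry $p_i$ is one of the strictly fewer than $k'$ close neighbors of $p^*$ from Condition~1, and the slot freed by the removal of $p_k$ accommodates the addition of $p^*$. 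Throughout, this uses the invariant that $\mathsf{ALG}$ is $(k',C/4)$-diverse, inherited from the initialization step of Lemma~\ref{lm:initialization} and preserved inductively by each previous application of this very lemma.

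The main obstacle is a refined counting in Condition~1 when the overlap $|\mathsf{OPT}\cap \mathsf{ALG}|$ is large: the naive double-count is tight only when the overlap is small, and handling the general regime requires exploiting further that the points of $\mathsf{OPT}\cap \mathsf{ALG}$ are themselves spread out via the $\mathsf{ALG}$ diversity invariant, which limits how much their proximity budgets can overlap with the $\mathsf{OPT}\setminus\mathsf{ALG}$ side. A closely related subtlety is reconciling the $(k',C/4)$-diversity conclusion in Condition~2 with the looser $(k',C/12)$ admission test used in the search loop; this requires a bookkeeping step to confirm that every accepted swap is one guaranteed by this lemma, so that the stronger invariant on $\mathsf{ALG}$ is in fact preserved through the execution.
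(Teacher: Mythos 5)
Your argument for Condition~1 has a genuine gap, and you half-acknowledge it yourself: the bipartite double count does not close except when the overlap $a=|\mathsf{OPT}\cap\mathsf{ALG}|$ is tiny. The degree bound $\deg_H(p_i)\le k'$ on the $\mathsf{ALG}\setminus\{p_k\}$ side gives $|E(H)|\le (k-1)k'$, while the contradiction hypothesis together with the self-edges of $\mathsf{OPT}\cap\mathsf{ALG}$ gives only $|E(H)|\ge k'(k-a)+a$; the difference is $k'-a(k'-1)$, which is positive only when $k'=1$ or $a\le 1$. So for $k'\ge 2$ and overlap at least $2$ the count proves nothing, and the repair you gesture at --- leaning on the $(k',C/12)$ invariant of $\mathsf{ALG}$ --- is not the right lever: that invariant controls clumping at scale $C/12$, whereas the relevant scale here is $C/2$, and in fact the paper's proof of this lemma uses no property of $\mathsf{ALG}$ whatsoever for Condition~1 (which matters, since the lemma is applied to a set that is only $(k',C/12)$-diverse).

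The missing idea is a clustering/pigeonhole argument rather than a global degree count. The paper greedily clusters $\overline{\mathsf{OPT}}=\mathsf{OPT}\setminus\mathsf{ALG}$: repeatedly pick a representative $p^*_i\in\overline{\mathsf{OPT}}$ and remove $z_i=B_\rho(p^*_i,C)\cap\overline{\mathsf{OPT}}$, so the representatives are pairwise at $\rho$-distance at least $C$, they lie outside $\mathsf{ALG}$, and the balls $B_\rho(p^*_i,C/2)$ are disjoint. With $w_i=B_\rho(p^*_i,C/2)\cap(\mathsf{ALG}\setminus p_k\setminus\mathsf{OPT})$, disjointness gives $\sum_i|w_i|\le|\mathsf{ALG}\setminus p_k\setminus\mathsf{OPT}|<|\mathsf{OPT}\setminus\mathsf{ALG}|=\sum_i|z_i|$ (this is where $p_k\notin\mathsf{OPT}$ enters), so some cluster has $|w_i|<|z_i|$. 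For that $i$, the points of $\mathsf{ALG}\cap\mathsf{OPT}$ inside $B_\rho(p^*_i,C/2)$ and the points of $z_i$ are disjoint subsets of $B_\rho(p^*_i,C)\cap\mathsf{OPT}$, whose size is at most $k'$ by the $(k',C)$-diversity of $\mathsf{OPT}$; hence $|B_\rho(p^*_i,C/2)\cap(\mathsf{ALG}\setminus p_k)|<k'$, which is Condition~1. Your treatment of Condition~2 (only balls of points within $C/4$ of $p^*$ are affected, and those balls sit inside $B_\rho(p^*,C/2)$) matches the paper, and your concern about reconciling the $C/4$ conclusion with the $C/12$ admission test is fair --- the paper itself is loose there and only uses Condition~1 downstream --- but without the clustering step the core counting claim of the lemma remains unproven.
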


\begin{proof}

Recall that we use $B_{\rho}(p,r)$ to denote the ball in the $(X,\rho)$ metric space. Because $p_k\notin \mathsf{OPT}$, we have $\overline{\mathsf{OPT}}=\mathsf{OPT}\setminus \mathsf{ALG}\neq \varnothing$. We repeatedly perform the following operation until $\overline{\mathsf{OPT}}$ gets empty: select a point $p$ from $\overline{\mathsf{OPT}}$, and let $z=B_{\rho}(p,C)\cap \overline{\mathsf{OPT}}$, and remove $z$ from $\overline{\mathsf{OPT}}$. By doing this, we can get a list of points $\{p^*_1,...,p^*_m\}$ and a partition of $\mathsf{OPT}\setminus \mathsf{ALG}=z_1\cup z_2...\cup z_m$. By definition, we have the following properties:

\begin{itemize}
    \item $\{p^*_1,...,p^*_m\}\cap \mathsf{ALG}=\varnothing$
    \item $z_i\cap z_j=\varnothing$ for $i\neq j$
    \item $\sum_i |z_i|=|\mathsf{OPT}\setminus \mathsf{ALG}|=|\mathsf{ALG}\setminus \mathsf{OPT}|$
\end{itemize}

Now let $w_i=B_{\rho}(p^*_i,C/2)\cap (\mathsf{ALG}\setminus p_k\setminus \mathsf{OPT})$. Because all the $B_{\rho}(p^*_i,C/2)$ balls are disjoint, $\sum_i |w_i|\le |\mathsf{ALG}\setminus p_k\setminus \mathsf{OPT}|<|\mathsf{OPT}\setminus \mathsf{ALG}|=\sum_i |z_i|$, there must exist an $i$ such that $|w_i|<|z_i|$. For that $i$, we have that $|B_{\rho}(p^*_i,C/2)\cap (\mathsf{ALG}\setminus p_k)|$ is equal to
\begin{align}
= &|B_{\rho}(p^*_i,C/2)\cap (\mathsf{ALG}\cap \mathsf{OPT})|+|B_{\rho}(p^*_i,C/2)\cap (\mathsf{ALG}\setminus p_k \setminus \mathsf{OPT})|\tag{Because $p_k\notin \mathsf{OPT}$}\\
=&|B_{\rho}(p^*_i,C/2)\cap (\mathsf{ALG}\cap \mathsf{OPT})|+|w_i|\notag \\
<&|B_{\rho}(p^*_i,C/2)\cap (\mathsf{ALG}\cap \mathsf{OPT})|+|z_i|\notag\\
\le&|B_{\rho}(p^*_i,C/2)\cap (\mathsf{ALG}\cap \mathsf{OPT})|+|B_{\rho}(p^*_i,C)\cap (\mathsf{OPT}\setminus\mathsf{ALG})|\notag \\
\le&|B_{\rho}(p^*_i,C)\cap (\mathsf{ALG}\cap \mathsf{OPT})|+|B_{\rho}(p^*_i,C)\cap (\mathsf{OPT}\setminus\mathsf{ALG})|\notag \\
=&|B_{\rho}(p^*_i,C)\cap \mathsf{OPT}|\notag \\
\le&  k'\notag
\end{align}

Therefore, we get $B_{\rho}(p^*_i,C/2)\cap(\mathsf{ALG}\setminus p_k)<k'$. Now, for any point $p\in B_{\rho}(p^*_i,C/4)$, $|B_{\rho}(p,C/4)\cap (\mathsf{ALG}\setminus p_k)|\le |B_{\rho}(p^*_i,C/2)\cap (\mathsf{ALG}\setminus p_k)|<k'$, so we know that $\mathsf{ALG}\setminus p_k\cup p^*_i$ is $(k',C/4)$-diverse.

\end{proof}
The following is the well-known anti-cover property of the greedy algorithm of Gonzales whose proof we include for the sake of completeness.
\begin{proposition}\label{prop:greedy}
In Line 14 of Algorithm~\ref{alg:indexing}, let $\mathsf{rep}[u]$ be the output of greedily choosing $k/k'$ points in $\mathsf{bag}[u]$ maximizing minimum pairwise diversity. If a point $p\in \mathsf{bag}[u]\setminus \mathsf{rep}[u]$, we have $\min\limits_{v\in \mathsf{rep}[u]}\rho(p,v)\le \min\limits_{v_1,v_2\in \mathsf{rep}[u]}\rho(v_1,v_2)$.
\end{proposition}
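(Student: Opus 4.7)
The plan is to exploit the standard anti-cover property of the Gonzales furthest-point greedy, which essentially says that the sequence of ``min-distances to the current set'' achieved by the greedy picks is monotonically non-increasing, so the final pick determines the minimum pairwise $\rho$-distance in $\mathsf{rep}[u]$, and every non-picked point must have been beaten by that final pick.

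More concretely, let $v_1,\ldots,v_m$ with $m=k/k'$ denote the points of $\mathsf{rep}[u]$ in the order the greedy selected them from $\mathsf{bag}[u]$, and for $i\ge 2$ set $d_i=\min_{j<i}\rho(v_i,v_j)$, the ``bottleneck'' value attained by the $i$-th pick. First I would verify that $d_2\ge d_3\ge\cdots\ge d_m$: the point $v_{i+1}$ was already a candidate at step $i$, so $\min_{j<i}\rho(v_{i+1},v_j)\le d_i$ by the greedy's maximality; adding one more center $v_i$ can only shrink this minimum, giving $d_{i+1}\le d_i$. Because every pairwise distance among $\{v_1,\ldots,v_m\}$ is realized as $\rho(v_i,v_j)$ for some $j<i$, this monotonicity yields
\[
\min_{v_1,v_2\in\mathsf{rep}[u]}\rho(v_1,v_2)=d_m.
\]

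Next, take any $p\in\mathsf{bag}[u]\setminus\mathsf{rep}[u]$. At iteration $m$ of the greedy the point $p$ was still a candidate but was passed over in favor of $v_m$, so by the greedy's maximality
\[
\min_{j<m}\rho(p,v_j)\le\min_{j<m}\rho(v_m,v_j)=d_m.
\]
Since $\mathsf{rep}[u]\supseteq\{v_1,\ldots,v_{m-1}\}$, we have $\min_{v\in\mathsf{rep}[u]}\rho(p,v)\le\min_{j<m}\rho(p,v_j)\le d_m$, which together with the first display gives exactly the claim.

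I do not expect any serious obstacle here; the only subtlety is the monotonicity of $d_i$ (which requires noting that $v_{i+1}$ was an eligible candidate at step $i$) and the identification of $d_m$ with the minimum pairwise diversity, so that ``non-chosen means beaten at the final step'' completes the argument.
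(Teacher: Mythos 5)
Your proof is correct and uses essentially the same mechanism as the paper: greedy maximality at the step where a chosen point was selected, combined with the fact that the unchosen point $p$ remains a candidate throughout. The only cosmetic difference is packaging — the paper argues by contradiction at the step when the later endpoint of the pair realizing $\min_{v_1,v_2\in \mathsf{rep}[u]}\rho(v_1,v_2)$ was added, whereas you argue directly by noting the bottleneck values $d_i$ are non-increasing so the last pick realizes the minimum pairwise diversity, and then compare $p$ against $v_m$ at the final step; both are valid instances of the standard anti-cover property of the Gonzales greedy.
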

\begin{proof}
For the sake of contradiction, suppose $\min\limits_{v\in \mathsf{rep}[u]}\rho(p,v)>\min\limits_{v_1,v_2\in \mathsf{rep}[u]}\rho(v_1,v_2)$, and the pairwise diversity minimizer is achieved by $\min\limits_{v_1,v_2\in \mathsf{rep}[u]}\rho(v_1,v_2)=\rho(x,y)$. Without loss of generality, we assume $x$ is added to $\mathsf{rep}[u]$ before $y$. At the time step $t$ when $y$ was added to $\mathsf{rep_t}[u]$, $\min\limits_{v\in \mathsf{rep_t}[u]}\rho(y,v)=\rho(x,y)$ and $\min\limits_{v\in \mathsf{rep_t}[u]}\rho(p,v)\ge \min\limits_{v\in \mathsf{rep}[u]}\rho(p,v)> \rho(x,y)$, so $y$ wouldn't have been chosen by the greedy algorithm. Therefore, we have derived a contradiction.
\end{proof}

\begin{lemma}\label{lm:update_diverse}
There always exists a point $p'$ connected from some point $w\in \mathsf{ALG}$ such that
\begin{enumerate}
\item $\mathsf{ALG}\setminus p_k \bigcup p'$ is $(k',C/12)$-diverse 
\item $D(p',q)\le D(p_k,q)/\alpha+\mathsf{OPT_k}(1+1/\alpha)$
\end{enumerate}
\end{lemma}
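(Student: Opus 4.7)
The plan is to mirror the proof of Lemma~\ref{lm:update_color} (the colorful analogue), replacing the color-based pigeonhole by the anti-cover property of the Gonzales greedy algorithm (Proposition~\ref{prop:greedy}) and certifying the $(k', C/12)$-diversity using both guarantees of Lemma~\ref{lm:p_star_exists_diverse}.

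First, I would invoke Lemma~\ref{lm:p_star_exists_diverse} to fix a point $p^* \in \mathsf{OPT} \setminus \mathsf{ALG}$ with $\mathsf{ALG}\setminus p_k \cup \{p^*\}$ being $(k', C/4)$-diverse and $|B_\rho(p^*, C/2) \cap (\mathsf{ALG}\setminus p_k)| < k'$. If the edge $(p_k, p^*)$ already exists in the graph produced by Algorithm~\ref{alg:indexing}, I set $p' = p^*$ with $w = p_k$; the $(k', C/4)$-diversity implies $(k', C/12)$-diversity, and $D(p^*, q) \le \mathsf{OPT_k}$ gives the distance bound. Otherwise $p^*$ was removed from $\mathcal{L}$ on Line 11 when processing some $u$ with $D(p_k, u) \le D(p_k, p^*)$ and $D(u, p^*) \le D(p_k, u)/(2\alpha)$, so $p^* \in \mathsf{bag}[u]$; if $p^* \in \mathsf{rep}[u]$ we are back in the edge case, else Proposition~\ref{prop:greedy} yields $v^* \in \mathsf{rep}[u]$ with $\rho(p^*, v^*) \le r := \min_{v_1, v_2 \in \mathsf{rep}[u]} \rho(v_1, v_2)$.

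For any candidate $p' \in \mathsf{rep}[u] \subseteq N_{\text{out}}(p_k)$, the distance bound $D(p', q) \le D(p_k, q)/\alpha + \mathsf{OPT_k}(1 + 1/\alpha)$ follows by exactly the triangle-inequality chain of Lemma~\ref{lm:update_color}, using Line 9 of Algorithm~\ref{alg:indexing} and the ordering of $u$ before $p^*$ in $\mathcal{L}$; so condition 2 holds for any such choice and the remaining work is the diversity. I would split on $r$. In the regime $r \le C/6$, set $p' = v^*$: since $\rho(p', p^*) \le C/6$, triangle inequality gives $B_\rho(p', C/12) \subseteq B_\rho(p^*, C/4)$ and, for any $p \in \mathsf{ALG}\setminus p_k$ with $\rho(p, p') < C/12$, $B_\rho(p, C/12) \subseteq B_\rho(p^*, C/3) \subseteq B_\rho(p^*, C/2)$. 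The first containment together with the $(k', C/4)$-diversity of $\mathsf{ALG}\setminus p_k \cup \{p^*\}$ gives $|B_\rho(p', C/12) \cap (\mathsf{ALG}\setminus p_k)| \le k' - 1$ (as $p^*$ consumes one slot of $B_\rho(p^*, C/4)$); the second, together with the bound on $|B_\rho(p^*, C/2) \cap (\mathsf{ALG}\setminus p_k)|$ from Lemma~\ref{lm:p_star_exists_diverse}, shows no such $p$ is saturated, so $\mathsf{ALG}\setminus p_k \cup \{p'\}$ is $(k', C/12)$-diverse. In the regime $r > C/6$, the balls $\{B_\rho(v, C/12)\}_{v \in \mathsf{rep}[u]}$ are pairwise disjoint, so a volumetric pigeonhole of the $k - 1$ points of $\mathsf{ALG}\setminus p_k$ over $k/k'$ disjoint balls exhibits some $v \in \mathsf{rep}[u]$ with $|B_\rho(v, C/12) \cap (\mathsf{ALG}\setminus p_k)| \le k' - 1$, and the $>C/6$ separation implies each point of $\mathsf{ALG}\setminus p_k$ lies within $C/12$ of at most one such $v$.

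The main obstacle I expect is this second regime: the volumetric pigeonhole controls the crowding obligation $|B_\rho(v, C/12) \cap (\mathsf{ALG}\setminus p_k)| \le k'-1$, but additionally ruling out saturated ``blockers'' of the chosen $v$ is delicate, since a single saturated cluster of diameter $<C/6$ can in principle block several $v$'s that are pairwise $>C/6$ apart. The plan for this step is to combine (i) the charging that distinct blocked $v$'s must have distinct blocker points (from the $>C/6$ separation in $\mathsf{rep}[u]$) with (ii) the volumetric contribution that each saturated blocker drags $k'-1$ further points of $\mathsf{ALG}\setminus p_k$ into its neighborhood, so as to bound the number of ``bad'' $v$'s strictly below $k/k'$ and extract at least one $v \in \mathsf{rep}[u]$ avoiding both failure modes.
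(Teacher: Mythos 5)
Your skeleton matches the paper's proof (invoke \Cref{lm:p_star_exists_diverse}, dispose of the case where the edge to $p^*$ exists, otherwise locate $u$ with $p^*\in\mathsf{bag}[u]\setminus\mathsf{rep}[u]$, use Proposition~\ref{prop:greedy}, and close with the same triangle-inequality chain), and your first regime is essentially the paper's Case 1 and is correct. The genuine gap is exactly where you flagged it, and the repair you sketch does not work. With the split at $r>C/6$ you only get pairwise $\rho$-separation $>C/6$ inside $\mathsf{rep}[u]$, so the pigeonhole can only be run with radius-$C/12$ balls, and the ``saturated blocker'' failure mode cannot be excluded by your charging: a blocker $p_i$ for candidate $v_i$ satisfies $\rho(p_i,v_i)<C/12$, and two blockers for distinct candidates are only forced to satisfy $\rho(p_1,p_2)>C/6-C/12-C/12=0$, i.e.\ they may be arbitrarily close. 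Since $\rho$ is an arbitrary pseudo-metric with no doubling assumption, nothing in your regime-B information rules out the configuration where all $k/k'$ candidates are pairwise at distance about $C/5>C/6$ while a single cluster of roughly $k'$ points of $\mathsf{ALG}\setminus p_k$, pairwise within $C/12$ of one another, supplies one blocker within $C/12$ of \emph{every} candidate; each such blocker is saturated by the shared cluster, so every candidate is bad even though only about $k'$ points of $\mathsf{ALG}$ are involved. Your step (ii) --- that each saturated blocker contributes $k'-1$ \emph{further} points of $\mathsf{ALG}\setminus p_k$ --- is false because distinct blockers can share the same crowd, so the number of bad candidates cannot be pushed below $k/k'$ by this count.

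The paper's case split avoids the issue entirely: it splits on whether $\min_{v\in\mathsf{rep}[u]}\rho(p^*,v)<C/3$. If yes, the chosen $p'$ has $\rho(p',p^*)<C/3$, hence $B_\rho(p',C/6)\subseteq B_\rho(p^*,C/2)$, which contains fewer than $k'$ points of $\mathsf{ALG}\setminus p_k$ by \Cref{lm:p_star_exists_diverse}; if no, Proposition~\ref{prop:greedy} upgrades the pairwise separation of $\mathsf{rep}[u]$ to at least $C/3$, so each point of $\mathsf{ALG}\setminus p_k$ lies in at most one ball $B_\rho(z_i,C/6)$ and the pigeonhole over the $k/k'$ candidates can be run with radius $C/6$. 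In both cases one obtains a candidate $p'$ whose radius-$C/6$ ball contains fewer than $k'$ points of $\mathsf{ALG}\setminus p_k$, and this single fact also eliminates your blocker concern, since any $p\in\mathsf{ALG}\setminus p_k$ with $\rho(p,p')<C/12$ has $B_\rho(p,C/12)\subseteq B_\rho(p',C/6)$. So your write-up can be rescued by moving the threshold from $C/6$ to $C/3$ (splitting on $\rho(p^*,v^*)$ rather than on $r$) and counting with $C/6$-balls. A secondary point: the paper takes $w$ to be the point of $\mathsf{ALG}$ closest to $p^*$ rather than $p_k$; the distance chain still closes because $D(w,q)\le D(p_k,q)$, and this choice additionally guarantees $\mathsf{bag}[u]\cap\mathsf{ALG}=\varnothing$, so the returned $p'$ is genuinely new --- a property the progress argument in the proof of \Cref{thm:diverse_ann} implicitly uses and which your choice $w=p_k$ does not provide.
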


\begin{proof}
According to Lemma~\ref{lm:p_star_exists_diverse}, for any current solution $\mathsf{ALG}$ with $p_k\notin \mathsf{OPT}$, there exists a point $p^*\in \mathsf{OPT}\setminus \mathsf{ALG}$ such that $\mathsf{ALG}\setminus p_k\cup p^*$ is $(k',C/4)$-diverse. Let $w\in \mathsf{ALG}$ be the closest point to $p^*$. If there exists an edge from $w$ to $p^*$, replacing $p_k$ with $p^*$ is a potential update. We set $p'=p^*$ and $D(p',q)\le \mathsf{OPT_k}$ satisfies the distance upper bound above.



Otherwise, we let $u$ be the point where $p^*\in \mathsf{bag}[u]$ but not selected into $\mathsf{rep}[u]$. For any point $p'\in \mathsf{bag}[u]$, $D(p',u)<D(w,u)/(2\alpha)$, so $D(p',p^*)<D(w,u)/\alpha<D(w,p^*)$. This means that all points in $\mathsf{bag}[u]$ are closer to $p^*$ than $w$, so they can't belong to $\mathsf{ALG}$. In the following, we consider two cases depending on whether $\min\limits_{v\in \mathsf{rep}[u]}\rho(p^*,v)\ge C/3$. In each case, we will find a desired $p'\in \mathsf{rep}[u]$ and it is connected to $w$.

\begin{enumerate}
\item $\min\limits_{v\in \mathsf{rep}[u]}\rho(p^*,v)<C/3$: In this case, there exists another point $p'\in \mathsf{rep}[u]$ with $D(p^*,p')\le D(p^*,u)+D(u,p')\le D(w,u)/\alpha$ and $\rho(p^*,p')<C/3$. Because $|B_{\rho}(p^*,C/2)\bigcap (\mathsf{ALG}\setminus p_k)|<k'$, we have $|B_{\rho}(p',C/6)\bigcap (\mathsf{ALG}\setminus p_k)|\subseteq |B_{\rho}(p^*,C/2)\bigcap (\mathsf{ALG}\setminus p_k)|<k'$, so the addition of such $p'$ satisfies that $\mathsf{ALG}\setminus p_k\cup p'$ is $(k',C/12)$-diverse. 
\item $\min\limits_{v\in \mathsf{rep}[u]}\rho(p^*,v)\ge C/3$: In this case, according to Proposition~\ref{prop:greedy}, we have $\mathsf{rep}[u]=\{z_1,...,z_{k/k'}\}\subseteq B(u,D(u,w)/(2\alpha))$ all with diversity distance at least $C/3$ from each other. Therefore, for any $p_i\in \mathsf{ALG}\setminus p_k$, there can't exist two $z_j$ and $z_{j'}$ s.t. $\rho(p_i,z_j)<C/6$ and $\rho(p_i,z_{j'})<C/6$. By a counting argument, we can find at least one $z_i$ s.t. $|B_{\rho}(z_i,C/6)\cap (\mathsf{ALG}\setminus p_k)|<k'$. Finally, we let $p'=z_i$ where $\mathsf{ALG}\setminus p_k \cup p'$ is $(k',C/12)$-diverse.
\end{enumerate}

We have proved that the $p'$ we found satisfies the $(k',C/12)$-diverse criteria. Now we will bound its distance upper bound.
\begin{align}
D(p',q)&\le D(p^*,q)+D(p',p^*) \le D(p^*,q)+D(p',u)+D(p^*,u) \notag\\
&\le D(p^*,q)+D(w,u)/(2\alpha)+D(w,u)/(2\alpha) \tag{Line 9 in Algorithm~\ref{alg:indexing}}\\
&\le D(p^*,q)+D(w,u)/\alpha \notag\\
&\le D(p^*,q)+D(w,p^*)/\alpha \tag{Because $u$ is ordered earlier than $p^*$}\\
&\le D(p^*,q)+D(w,q)/\alpha+D(p^*,q)/\alpha 
\le D(p_k,q)/\alpha+\mathsf{OPT_k}(1+1/\alpha)\notag
\end{align}

\end{proof}

\begin{proof}[Proof of Theorem~\ref{thm:diverse_ann}]

Regarding the running time, the total number of edges connected from any point in $\mathsf{ALG}$ is bounded by $|U|\le O((k^2/k')(8\alpha)^d\log \Delta)$. In each step, the algorithm first sorts all these edges and then checks whether each of them can be added to the new $\mathsf{ALG}$ set. The total time spent per step is $O(k|U|+|U|\log|U|)$. Usually, we assume $k\gg \log|U|$, and we can have the overall time complexity to be $O\left((k^3/k')(8\alpha)^d\log\Delta\right)$ per step.

To analyze the approximation ratio, at time step $t$, we use $\mathsf{ALG^t}=\{p^t_1,...,p^t_k\}$ to denote the current {\em unordered} solution. We denote $\mathsf{ALG^t_k}=\max\limits_{i\in[k]}D(p^t_i,q)$. According to Algorithm~\ref{alg:general_search} and Lemma~\ref{lm:update_diverse}, if $p_i$ is updated at time step $t$, we have $D(p^t_i,q)\le D(p^{t-1}_i,q)/\alpha+\mathsf{OPT_k}(1+1/\alpha)$. By an induction argument, if a point $p_i$ is updated by $t$ times at the end of time step $T$, we have $D(p^T_i,q)\le \frac{D(p^0_i,q)}{\alpha^t}+\frac{\alpha+1}{\alpha-1}\mathsf{OPT_k}$. 

We now prove that $\mathsf{ALG^T_k}\le \max\limits_{i}\frac{D(p^0_i,q)}{\alpha^{T/k}}+\frac{\alpha+1}{\alpha-1}\mathsf{OPT_k}$. Let $i\in[k]$ be the index achieving the maximal distance upper bound. For the sake of contradiction, if $\mathsf{ALG^T_k}>\frac{D(p^0_i,q)}{\alpha^{T/k}}+\frac{\alpha+1}{\alpha-1}\mathsf{OPT_k}$, this means that $p^T_i$ was updated for at most $T/k-1$ times. By a counting argument, there exists another index $j$ which was updated for at least $T/k+1$ times. However, at the time $t$ when $p^t_j$ was already updated for $T/k$ times, $D(p^t_j,q)\le \frac{D(p^0_j,q)}{\alpha^{T/k}}+\frac{\alpha+1}{\alpha-1}\mathsf{OPT_k} < \mathsf{ALG^T_k}\le \mathsf{ALG^t_k}$, so the algorithm wouldn't have chosen $p^t_j$ to optimize cause it couldn't have had the maximal distance at that time, leading to a contradiction. Therefore, we prove that $\mathsf{ALG^T_k}\le \max\limits_{i}\frac{D(p^0_i,q)}{\alpha^{T/k}}+\frac{\alpha+1}{\alpha-1}\mathsf{OPT_k}$.


Now we consider the following three cases depending on the value of the maximal $D(p^0_i,q)$. The case analysis here is similar to the proof in Theorem 3.4 from~\cite{indykxu2024worst}.
\begin{enumerate}
\item[Case 1:] $D(p^0_i,q)>2D_{max}$. Let $p^*_k$ be the point having the maximal distance from $q$ in an optimal solution $\mathsf{OPT}$. We know that for any $p^0_i$, we have $D(p^*_k,q)\ge D(p^0_i,q)-D(p^0_i,p^*_k)\ge D(p^0_i,q)-D_{max}\ge D(p^0_i,q)/2$. Therefore, the approximation ratio after $T$ optimization steps is upper bounded by $\frac{\mathsf{ALG^T_k}}{D(p^*_k,q)}\le \frac{D(p^0_i,q)}{D(p^*_k,q)\alpha^{T/k}}+\frac{\alpha+1}{\alpha-1}\le \frac{2}{\alpha^{T/k}}+\frac{\alpha+1}{\alpha-1}$. A simple calculation shows that we can get a $(\frac{\alpha+1}{\alpha-1}+\epsilon)$ approximate solution in $O(k\log_{\alpha}\frac{2}{\epsilon})$ steps.

\item[Case 2:] $D(p^0_i,q)\le 2D_{max}$ and $\mathsf{OPT_k}>\frac{\alpha-1}{4(\alpha+1)}D_{min}$. To satisfy $\frac{D(p^0_i,q)}{\alpha^{T/k}}+\frac{\alpha+1}{\alpha-1}\mathsf{OPT_k}\le (\frac{\alpha+1}{\alpha-1}+\epsilon)\mathsf{OPT_k}$, we need $\frac{D(p^0_i,q)}{\alpha^{T/k}}\le \epsilon \mathsf{OPT_k}$. Applying the lower bound $\mathsf{OPT_k}\ge \frac{\alpha-1}{4(\alpha+1)}D_{min}$, we can get that $T\ge k\log_{\alpha}\frac{2(\alpha+1)\Delta}{(\alpha-1)\epsilon}$ suffices.

\item[Case 3:] $D(p^0_i,q)\le 2D_{max}$ and $\mathsf{OPT_k}\le \frac{\alpha-1}{4(\alpha+1)}D_{min}$. In this case, we must have $k=1$, because otherwise $D(p^*_k,p^*_1)\le 2D(p^*_k,q)<D_{min}$,violating the definition of $D_{min}$. Suppose $k=1$ and the problem degenerates to the standard nearest neighbor search problem. After $T$ optimization steps, if $p^T_1$ is still not the exact nearest neighbor, we have $D(p^T_1,q)\ge D(p^T_1,p^*_1)-\mathsf{OPT_1}\ge \frac{D_{min}}{2}$. Applying the upper bound of $D(p^T_1,q)$ and $\mathsf{OPT_1}$, we have $\frac{D_{min}}{2}\le D(p^T_1,q)\le \frac{D(p^0_1,q)}{\alpha^{T}}+\frac{\alpha+1}{\alpha-1}\mathsf{OPT_1}\le \frac{D(p^0_1,q)}{\alpha^{T}}+\frac{D_{min}}{4}$. This can happen only if $T\le \log_{\alpha}\frac{\Delta}{8}$.
\end{enumerate}
\end{proof}

\subsection{Analysis for the Dual Diverse NN Algorithm}\label{sec:dual-analysis}

In this section we analyze Algorithm~\ref{alg:dual_search}.

\begin{algorithm}
\caption{Search algorithm for dual diverse NN}
\label{alg:dual_search}
\begin{algorithmic}[1]
\STATE \textbf{Input}: A graph $G=(V,E)$ with $N_{out}(p)$ denoting the out edges of $p$; query $q$; distance bound $R$; distance approximation error $\epsilon$.
\STATE  \textbf{Output}: A set of $k$ points $\mathsf{ALG}$.
\STATE Use binary search to find a maximal $C$ such that the initialization step proved in Lemma \ref{lm:initialization} outputs a $(k',C)$-diverse set $\mathsf{ALG}=\{p_1,...,p_k\}$
\STATE $\overline{C}\gets 4C$

\WHILE{$\max\limits_{p\in \mathsf{ALG}} D(p,q)>(\frac{\alpha+1}{\alpha-1}+\epsilon)\cdot R$}
    \STATE  $\overline{C}\gets \overline{C}/2$
    \FOR{$i=1$ to $c\cdot k\log_{\alpha}\frac{\Delta}{\epsilon}$}
    \STATE  $U\gets \bigcup\limits_{p\in \mathsf{ALG}}(N_{out}(p)\cup p)$ and sort $U$ based on their distance from $q$   
    \STATE $\mathsf{ALG}\gets$ the closest $k-1$ points in $\mathsf{ALG}$
    \FOR{each point $u \in U$ in order}
        \IF{$\mathsf{ALG}\bigcup u$ is $(k',\overline{C}/12)$-diverse}
            \STATE $\mathsf{ALG}\gets \mathsf{ALG}\cup u$
            \STATE Break
        \ENDIF
    \ENDFOR
    \ENDFOR
\ENDWHILE
\STATE \textbf{Return} $\mathsf{ALG}$
\end{algorithmic}
\end{algorithm}


\begin{proof}[Proof of Theorem~\ref{thm:dual_diverse_ann}]

After applying the binary search to the initialization algorithm in Lemma~\ref{lm:initialization}, we get an initial $(k',C)$-diverse solution and we know there doesn't exist a $(k',4C)$-diverse solution. Therefore, we set $\overline{C}=4C$ to be the upper bound on the maximal diversity we can achieve.

Then our Algorithm~\ref{alg:dual_search} is basically adding a binary search to Algorithm~\ref{alg:general_search}. Invoking the analysis from Theorem~\ref{thm:diverse_ann}, if there exists a $(k',C)$-diverse solution $\mathsf{OPT}=\{p^*_1,...,p^*_k\}$ with $\mathsf{OPT_k}\le R$, we can find a $(k',C/12)$-diverse solution $\mathsf{ALG}=\{p_1,...,p_k\}$ with $ALG_k\le \left(\frac{\alpha+1}{\alpha-1}+\epsilon\right)\cdot R$ in $O(k\log_{\alpha}\frac{\Delta}{\epsilon})$ steps where each step takes $\tilde{O}((k^3/k')(8\alpha)^d\log\Delta)$ time. As a result, each time when the algorithm enters the while loop on Line 5 in Algorithm~\ref{alg:dual_search}, we know that there doesn't exist a $(k',\overline{C})$-diverse solution with maximal distance smaller than $R$. When we exit the while loop, the current $\overline{C}$ value is at least $1/2$ of the optimal $C$ value, and the current $\mathsf{ALG}$ solution we get is at least $(k',C/24)$-diverse.
\end{proof}

\newcommand{\mcL}{\mathcal{L}}
\newcommand{\mcV}{\mathcal{V}}

\section{Algorithm Implementation} \label{sec:impl}
To conduct our experiments, we provide the heuristic algorithm that we designed for the $k'$-colorful nearest neighbor problem, based on the provable algorithms provided in the main paper. The provable indexing algorithm (\ref{alg:indexing}) has a runtime which is quadratic in the size of the data set and is slow in practice. This situation mimics the original DiskANN algorithm~\cite{jayaram2019diskann}, where the ``slow preprocessing" algorithm  has provable guarantees~\cite{indykxu2024worst} but quadratic running time, and was replaced by a heuristic ``fast preprocessing'' algorithm used in the actual implementation~\cite{DiskANN}. 
Here, Algorithm~\ref{alg:colorindex} offers a fast method tailored for the $k'$-colorful case, using several heuristics to improve the runtime. In the following section, we present the pseudocode for the procedures: search, index build, and the pruning procedure required for the index build.



\paragraph{Diverse Search.}
Our diverse search procedure, is a greedy graph-based local search method. In our search method, in each step, we maintain a list of best and diverse nodes, ensuring that at most $k'$ points are selected in the list per color. In each iteration of our search algorithm, we choose the best unexplored node and examine its out neighbors. From the union of our current list and the out neighbors, we select the best diverse set of nodes while satisfying the \(k'\)-colorful diversity constraint—meaning no color can have more than \(k'\) points in the updated list. To identify the optimal diverse set from the union, we use a priority queue designed to accommodate the diversity constraint. Below, we present the pseudocode for this diverse priority queue.
\begin{algorithm}
\caption{Insert $(p,d,c)$ into \diversequeue(Q, $L$, $k'$)}
\label{alg:diverse_queue}
\begin{algorithmic}[1]
\STATE \textbf{Input}: Current queue $Q$, tuple $(p,d,c)$ of (point, distance, color) for new insertion, maximum size $L$ of the queue, maximum size $k'$ per color. 
\STATE \textbf{Output}: Updated queue $Q$ after inserting $(p,d,c)$ which maintains the best set of at most $L$ points and at most $k'$ points of each color.
\STATE Let ${\sf count}({{c}}) \gets$ number of elements in $Q$ with the  color $c$.
\STATE Let ${\sf maxDist}({{c}}) \gets$ maximum distance of an element in $Q$ with color $c$.
\IF{${\sf count}({{c}}) < k'$ \textbf{or} $d < {\sf maxDist}({{c}})$}
    \STATE Insert $(p, d, c)$ into $Q$
    \IF{${\sf count}({{c}}) > k'$}
        \STATE Remove the element with the maximum distance in $Q$ having color $c$.
    \ENDIF
\ENDIF
\IF{$|Q| > L$}
    \STATE Remove the element with the maximum distance in $Q$.
\ENDIF
\end{algorithmic}
\end{algorithm}

Building on the previous explanation of the diverse priority queue, we outline the description of our diverse search procedure as follows.

\begin{algorithm}
\caption{$\diversesearch(G,s,q,k',k,L)$}
\label{alg:color_search}
\begin{algorithmic}[1]
\STATE \textbf{Input}: A directed graph $G$, start node $s$, query $q$, max per color parameter $k'$, search list size $L$.
\STATE \textbf{Output}: A set of $k$ points such that there are at most $k'$ points from any color.
\STATE {Initialize} \diversequeue $\mathcal{L} \leftarrow \{(s,D(s,q),col[s]) \}$ with color parameter $k'$ and size parameter $L$.  
\STATE {Initialize} a set of expanded nodes $\mathcal{V} \leftarrow \emptyset$\;
\WHILE{$\mathcal{L} \setminus \mathcal{V} \neq \emptyset$}
        \STATE $\text{Let } p^{*} \leftarrow \argmin\limits_{p \in \mathcal{L} \setminus \mathcal{V}} D(p,q)$\;
        \STATE $\mathcal{V} \leftarrow \mathcal{V} \cup \{p^{*}\}$\;
        \STATE Insert $\{\left(p, D(p,q), col[p]\right):p \in N_{\text{out}}(p^*)\}$ to $\mathcal{L}$\;
\ENDWHILE
\STATE \textbf{Return} $[\text{top } k \text{ NNs from } \mathcal{L}; \mathcal{V}]$\;
\end{algorithmic}
\end{algorithm}

\paragraph{Diverse Prune.} A key subroutine in our index-building algorithm is the prune procedure. Given a node $p$ and a set of potential outgoing edges $\mathcal{V}$, the standard prune procedure removes an edge to a vertex \(w\) if there exists a vertex \(u\) such that an edge \(p \rightarrow u\) exists and the condition \(D(u, w) \leq \frac{D(p, w)}{\alpha}\) is satisfied. Intuitively, this means that to reach \(w\), we would first reach \(u\), thus making multiplicative progress and eliminating the need for the edge \(p \rightarrow w\), which contributes to the sparsity of the graph.

However, to account for diversity, the outgoing edges from the node must also be diverse and enable access to multiple colors. To address this requirement, we modify the standard prune procedure to incorporate the diversity constraint. The details of our revised algorithm are provided next.

\begin{algorithm}[h]
\caption{$\diverseprune(p, \mcV,\alpha,R,m)$}
\label{alg:colorprune}
\begin{algorithmic}[1]
\small
\STATE \textbf{Input}: A point $p$, set $\mcV$, prune parameter $\alpha$, degree parameter $R$, and diversity parameter $m$.
\STATE \textbf{Output}: A subset $\mcV' \subseteq \mcV$ of cardinality at most $R$ to which edges are added.
\STATE Sort all points $u\in \mcV$ based on their distances from $p$ and add them to list $\mcL$ in that order.
\STATE Initialize sets $\blockers[u] \gets \emptyset$ for each $u \in \mcV$.
    \WHILE{$\mcL$ is not empty}
        \STATE $u\gets \argmin\limits_{u\in \mcL}D(u,p)$
        \STATE $\mcV' \gets \mcV' \cup \{u\}$ and $\mcL \gets \mcL \setminus \{u\}$
        \IF{$|\mcV'| = R$}
             \STATE {\bf break}
        \ENDIF
        \FOR{each point $w\in \mcL$}
            \IF{$D(u,w)\le D(p,w)/\alpha$}
                \STATE $\blockers[w]\gets \blockers[w]\cup \{{ col}(u)\}$
                \IF{$|\blockers[w]| = m$ or ${col}(u) = {col}(w)$}
                    \STATE {$\mcL \gets \mcL \setminus \{w\}$}
                \ENDIF
            \ENDIF
        \ENDFOR
    \ENDWHILE
\STATE{\bf Return } $\mcV'$
\end{algorithmic}
\end{algorithm}

\paragraph{Diverse Index.} Our indexing algorithm follows the same approach as the DiskANN ``fast preprocessing" heuristic implementation~\cite{DiskANN}, but we replace the search and prune procedures in their implementation with our diverse search and diverse prune procedures. The details of our index-building procedure are provided below.

\begin{algorithm}[h]
\caption{$\diverseindex(P,\alpha,L,R,m)$}
\label{alg:colorindex}
\begin{algorithmic}[1]
\small
\STATE \textbf{Input}: A set of $n$ points $P=\{p_1,\dots,p_{n} \}$, prune parameter $\alpha$, search list size $L$, degree parameter $R$, and  diversity parameter $m$.
\STATE \textbf{Output}: A directed graph $G$ over $P$ with out-degree at most $ R$.
    \STATE Let $s$ denote the estimated medoid of $P$.\;
    \STATE Initialize $G$ with start node $s$.\;
     
    \FOR{each $p_i\in P$}
        \STATE $\text{Let } [\mathcal{L}; \mathcal{V}] \leftarrow \diversesearch\left(G,s, p_{i}, k' = L/m, L, L \right)$ \; 
        \STATE Let $\mcV' = \diverseprune\left(p_i, \mathcal{V}, \alpha, R, m\right)$.
        \STATE Add node $p_i$ to $G$ and set $N_{\text{out}}(p_{i})= \mcV'$ (out-going edges from $p_i$ to $\mcV'$).\;
        \FOR{$p \in N_{out}(p_i)$}
            \STATE $\text{Update } N_{\text{out}}(p) \leftarrow N_{\text{out}}(p) \cup \{p_i\}$.\;
            \IF{$|N_{\text{out}}(p)| > R$}
                \STATE \textbf{Run} \diverseprune($p, N_{\text{out}}(p), \alpha, R,m$) to update out-neighbors of  $p$.\;
                \ENDIF
\ENDFOR
\ENDFOR
\end{algorithmic}
\end{algorithm}

\end{document}